\documentclass[dvipsnames]{article}
\usepackage{microtype}
\usepackage{subfigure}
\usepackage{booktabs}
\usepackage{hyperref}

\usepackage{amsmath}
\usepackage[utf8]{inputenc}
\usepackage{multicol}
\usepackage{marvosym}
\usepackage{graphicx}
\graphicspath{ {./images/} }
\usepackage{booktabs} 
\usepackage[T1]{fontenc}
\usepackage{lmodern}
\usepackage{natbib}
\usepackage{cite}
\usepackage{enumerate}
\usepackage{float}
\usepackage{nameref}
\usepackage{varioref}
\usepackage{hyperref}
\usepackage{cleveref}
\usepackage{amssymb}
\usepackage{amsfonts}
\usepackage{amsthm}
\usepackage{textcomp}
\usepackage{gensymb}
\usepackage{bm}
\usepackage{pgfplots}
\usepackage{CJKutf8}
\usepackage{pgf-umlsd}
\newtheorem{theorem}{Theorem}[section]

\newtheorem{proposition}[theorem]{Proposition}

\usepackage{algorithm}
\usepackage{pifont}
\usepackage{multirow,array}
\usepackage{centernot}
\usepackage[inline]{enumitem}
\usepackage{comment}
\usepackage{lipsum}
\usepackage{tcolorbox}
\tcbuselibrary{skins}
\usetikzlibrary{shadings}
\usetikzlibrary{matrix}
\usepackage{color}
\definecolor{deepblue}{rgb}{0,0,0.5}
\definecolor{deepred}{rgb}{0.6,0,0}
\definecolor{deepgreen}{rgb}{0,0.5,0}
\definecolor{asparagus}{rgb}{0.53, 0.66, 0.42}
\definecolor{coolblack}{rgb}{0.0, 0.18, 0.39}
\definecolor{amaranth}{rgb}{0.9, 0.17, 0.31}
\definecolor{amber}{rgb}{1.0, 0.49, 0.0}
\definecolor{blizzardblue}{rgb}{0.67, 0.9, 0.93}
\definecolor{brightpink}{rgb}{1.0, 0.0, 0.5}
\usepackage{listings}
\usepackage{tikz}
\usepackage{tikzsymbols}
\usepackage{skak}
\usetikzlibrary{shapes.geometric,backgrounds,patterns, trees}
\usepackage[linguistics]{forest}
\usetikzlibrary{shapes.geometric,backgrounds,patterns, trees}
\usetikzlibrary{3d,decorations.text,shapes.arrows,positioning,fit,backgrounds}
\tikzset{pics/fake box/.style args={
#1 with dimensions #2 and #3 and #4}{
code={
\draw[gray,ultra thin,fill=#1]  (0,0,0) coordinate(-front-bottom-left) to
++ (0,#3,0) coordinate(-front-top-right) --++
(#2,0,0) coordinate(-front-top-right) --++ (0,-#3,0)
coordinate(-front-bottom-right) -- cycle;
\draw[gray,ultra thin,fill=#1] (0,#3,0)  --++
 (0,0,#4) coordinate(-back-top-left) --++ (#2,0,0)
 coordinate(-back-top-right) --++ (0,0,-#4)  -- cycle;
\draw[gray,ultra thin,fill=#1!80!black] (#2,0,0) --++ (0,0,#4) coordinate(-back-bottom-right)
--++ (0,#3,0) --++ (0,0,-#4) -- cycle;
\path[gray,decorate,decoration={text effects along path,text={CONV}}] (#2/2,{2+(#3-2)/2},0) -- (#2/2,0,0);
}
}}
\tikzset{circle dotted/.style={dash pattern=on .05mm off 2mm,
    line cap=round}}

\usetikzlibrary{shapes.geometric,backgrounds,patterns, trees}
\usetikzlibrary{arrows.meta,
                bending,
                intersections,
                quotes,
                shapes.geometric}
\usetikzlibrary{automata, positioning}
\usetikzlibrary{shapes,arrows}
\usetikzlibrary{arrows.meta}
\usetikzlibrary{automata, positioning}
  \usetikzlibrary{shapes,shadows}
  \tikzstyle{abstractbox} = [draw=black, fill=white, rectangle,
  inner sep=10pt, style=rounded corners, drop shadow={fill=black,
  opacity=1}]
\tikzstyle{abstracttitle} =[fill=white]
\usetikzlibrary{calc,positioning,shapes.geometric}
\usetikzlibrary{arrows.meta,arrows}
\usetikzlibrary{matrix}
\colorlet{myRed}{red!20}
\tikzset{
  rows/.style 2 args={/utils/temp/.style={row ##1/.append style={nodes={#2}}},
    /utils/temp/.list={#1}},
  columns/.style 2 args={/utils/temp/.style={column ##1/.append style={nodes={#2}}},
    /utils/temp/.list={#1}}}
\makeatletter
\tikzset{anchor/.append code=\let\tikz@auto@anchor\relax,
  add font/.code=%
    \expandafter\def\expandafter\tikz@textfont\expandafter{\tikz@textfont#1},
  left delimiter/.style 2 args={append after command={\tikz@delimiter{south east}
    {south west}{every delimiter,every left delimiter,#2}{south}{north}{#1}{.}{\pgf@y}}}}
\tikzstyle{sms} = [rectangle callout, draw,very thick, rounded corners, minimum height=20pt]
\usetikzlibrary{positioning,calc}
\makeatletter
\pgfdeclareshape{document}{
\inheritsavedanchors[from=rectangle] 
\inheritanchorborder[from=rectangle]
\inheritanchor[from=rectangle]{center}
\inheritanchor[from=rectangle]{north}
\inheritanchor[from=rectangle]{south}
\inheritanchor[from=rectangle]{west}
\inheritanchor[from=rectangle]{east}
\backgroundpath{
\southwest \pgf@xa=\pgf@x \pgf@ya=\pgf@y
\northeast \pgf@xb=\pgf@x \pgf@yb=\pgf@y
\pgf@xc=\pgf@xb \advance\pgf@xc by-10pt
\pgf@yc=\pgf@yb \advance\pgf@yc by-10pt
\pgfpathmoveto{\pgfpoint{\pgf@xa}{\pgf@ya}}
\pgfpathlineto{\pgfpoint{\pgf@xa}{\pgf@yb}}
\pgfpathlineto{\pgfpoint{\pgf@xc}{\pgf@yb}}
\pgfpathlineto{\pgfpoint{\pgf@xb}{\pgf@yc}}
\pgfpathlineto{\pgfpoint{\pgf@xb}{\pgf@ya}}
\pgfpathclose
\pgfpathmoveto{\pgfpoint{\pgf@xc}{\pgf@yb}}
\pgfpathlineto{\pgfpoint{\pgf@xc}{\pgf@yc}}
\pgfpathlineto{\pgfpoint{\pgf@xb}{\pgf@yc}}
\pgfpathlineto{\pgfpoint{\pgf@xc}{\pgf@yc}}
}
}
\tikzstyle{doc}=[%
draw,
thick,
align=center,
color=black,
shape=document,
minimum width=20mm,
minimum height=28.2mm,
shape=document,
inner sep=2ex,
]
\tikzstyle{vecArrow} = [thick, decoration={markings,mark=at position
   1 with {\arrow[semithick]{open triangle 60}}},
   double distance=1.4pt, shorten >= 5.5pt,
   preaction = {decorate},
   postaction = {draw,line width=1.4pt, white,shorten >= 4.5pt}]
   \tikzstyle{innerWhite} = [semithick, white,line width=1.4pt, shorten >= 4.5pt]

\tikzset{
  mybackground/.style={execute at end picture={
        \begin{scope}[on background layer]
          \draw[black!15,fill=red!20,rounded corners=1ex] (current bounding box.south west)
                    rectangle (current bounding box.north east);
          \node[draw,fill=white,rectangle, anchor=west,inner sep=1pt,minimum width=4ex] at (current bounding box.north
                   west){#1};
        \end{scope}
    }},
}

\tikzset{
  mybackground2/.style={execute at end picture={
        \begin{scope}[on background layer]
          \draw[black!5,fill=black!15,rounded corners=1ex] (current bounding box.south west)
                    rectangle (current bounding box.north east);
          \node[draw,fill=white,rectangle,anchor=west,inner sep=4pt,minimum width=4ex] at (current bounding box.north
                   west){#1};
        \end{scope}
    }},
}

\tikzset{
  mybackground3/.style={execute at end picture={
        \begin{scope}[on background layer]
          \draw[black!5,fill=black!80, opacity=1,rounded corners=1ex] (current bounding box.south west)
                    rectangle (current bounding box.north east);
          \node[draw,fill=white,ellipse,anchor=west,inner sep=1pt,minimum width=4ex] at (current bounding box.north
                   west){#1};
        \end{scope}
    }},
}

\tikzset{
  mybackground4/.style={execute at end picture={
        \begin{scope}[on background layer]
          \draw[black!15,fill=black!80,rounded corners=1ex] (current bounding box.south west)
                    rectangle (current bounding box.north east);
          \node[draw,fill=white,ellipse,anchor=west,inner sep=1pt,minimum width=4ex] at (current bounding box.north
                   west){#1};
        \end{scope}
    }},
}

\tikzset{
  mybackground5/.style={execute at end picture={
        \begin{scope}[on background layer]
          \draw[black!5,fill=black!80!green!40,rounded corners=1ex] (current bounding box.south west)
                    rectangle (current bounding box.north east);
          \node[draw,fill=white,ellipse,anchor=west,inner sep=1pt,minimum width=4ex] at (current bounding box.north
                   west){#1};
        \end{scope}
    }},
}
\tikzset{
  mybackground6/.style={execute at end picture={
        \begin{scope}[on background layer]
          \draw[black!15,fill=black!15,rounded corners=1ex] (current bounding box.south west)
                    rectangle (current bounding box.north east);
          \node[draw,fill=white,ellipse,anchor=west,inner sep=1pt,minimum width=4ex] at (current bounding box.north
                   west){#1};
        \end{scope}
    }},
}

\tikzset{
  mybackground7/.style={execute at end picture={
        \begin{scope}[on background layer]
          \draw[black!5,fill=black!5,rounded corners=1ex] (current bounding box.south west)
                    rectangle (current bounding box.north east);
          \node[draw,fill=white,ellipse,anchor=west,inner sep=1pt,minimum width=4ex] at (current bounding box.north
                   west){#1};
        \end{scope}
    }},
}
\tikzset{
  mybackground9/.style={execute at end picture={
        \begin{scope}[on background layer]
          \draw[black!1,fill=white,rounded corners=1ex] (current bounding box.south west)
                    rectangle (current bounding box.north east);
          \node[draw,fill=white,ellipse,anchor=west,inner sep=1pt,minimum width=4ex] at (current bounding box.north
                   west){#1};
        \end{scope}
    }},
}

\tikzset{
  mybackground51/.style={execute at end picture={
      \begin{scope}[on background layer]
        \draw[black, rounded corners=2ex, fill=gray2] (current bounding box.south west)
        rectangle (current bounding box.north east);
        \node[draw,fill=white,ellipse,anchor=west,inner sep=1pt,minimum width=1ex] at (current bounding box.north
        west){#1};
      \end{scope}
    }},
}

    \tikzset{
        table/.style={
            matrix of nodes,
            row sep=-\pgflinewidth,
            column sep=-\pgflinewidth,
            nodes={
                rectangle,
                draw=black,
                align=center
            },
            minimum height=1.5em,
            text depth=0.5ex,
            text height=2ex,
            nodes in empty cells,
            every even row/.style={
                nodes={fill=gray!20}
            },
            column 1/.style={
                nodes={text width=5em,font=\bfseries}
            },
            row 1/.style={
                nodes={
                    fill=black,
                    text=white,
                    font=\bfseries
                }
            }
        }
      }
\usetikzlibrary{positioning, fit, arrows.meta, shapes}
\usepackage{bm}
\usepackage{relsize}
\tikzstyle{stateTransition}=[-stealth, thick]
\usetikzlibrary{decorations.pathmorphing, positioning}
\tikzstyle{inputNode}=[draw,circle,minimum size=10pt,inner sep=0pt]
\definecolor{echoreg}{HTML}{2cb1e1}
\definecolor{echodrk}{HTML}{0099cc}

\tikzstyle{mybox} = [text=black, very thick,
    rectangle, rounded corners, inner sep=10pt, inner ysep=20pt]
    \tikzstyle{fancytitle} =[text=black]

\pgfkeys{/pgf/.cd,
  parallelepiped offset x/.initial=2mm,
  parallelepiped offset y/.initial=2mm
}
\pgfplotsset{compat=1.16}
\usepgfplotslibrary{fillbetween}
\pgfdeclareshape{parallelepiped}
{
  \inheritsavedanchors[from=rectangle] 
  \inheritanchorborder[from=rectangle]
  \inheritanchor[from=rectangle]{north}
  \inheritanchor[from=rectangle]{north west}
  \inheritanchor[from=rectangle]{north east}
  \inheritanchor[from=rectangle]{center}
  \inheritanchor[from=rectangle]{west}
  \inheritanchor[from=rectangle]{east}
  \inheritanchor[from=rectangle]{mid}
  \inheritanchor[from=rectangle]{mid west}
  \inheritanchor[from=rectangle]{mid east}
  \inheritanchor[from=rectangle]{base}
  \inheritanchor[from=rectangle]{base west}
  \inheritanchor[from=rectangle]{base east}
  \inheritanchor[from=rectangle]{south}
  \inheritanchor[from=rectangle]{south west}
  \inheritanchor[from=rectangle]{south east}
  \backgroundpath{
    \southwest \pgf@xa=\pgf@x \pgf@ya=\pgf@y
    \northeast \pgf@xb=\pgf@x \pgf@yb=\pgf@y
    \pgfmathsetlength\pgfutil@tempdima{\pgfkeysvalueof{/pgf/parallelepiped
      offset x}}
    \pgfmathsetlength\pgfutil@tempdimb{\pgfkeysvalueof{/pgf/parallelepiped
      offset y}}
    \def\ppd@offset{\pgfpoint{\pgfutil@tempdima}{\pgfutil@tempdimb}}
    \pgfpathmoveto{\pgfqpoint{\pgf@xa}{\pgf@ya}}
    \pgfpathlineto{\pgfqpoint{\pgf@xb}{\pgf@ya}}
    \pgfpathlineto{\pgfqpoint{\pgf@xb}{\pgf@yb}}
    \pgfpathlineto{\pgfqpoint{\pgf@xa}{\pgf@yb}}
    \pgfpathclose
    \pgfpathmoveto{\pgfqpoint{\pgf@xb}{\pgf@ya}}
    \pgfpathlineto{\pgfpointadd{\pgfpoint{\pgf@xb}{\pgf@ya}}{\ppd@offset}}
    \pgfpathlineto{\pgfpointadd{\pgfpoint{\pgf@xb}{\pgf@yb}}{\ppd@offset}}
    \pgfpathlineto{\pgfpointadd{\pgfpoint{\pgf@xa}{\pgf@yb}}{\ppd@offset}}
    \pgfpathlineto{\pgfqpoint{\pgf@xa}{\pgf@yb}}
    \pgfpathmoveto{\pgfqpoint{\pgf@xb}{\pgf@yb}}
    \pgfpathlineto{\pgfpointadd{\pgfpoint{\pgf@xb}{\pgf@yb}}{\ppd@offset}}
  }
}

\makeatletter
\tikzset{anchor/.append code=\let\tikz@auto@anchor\relax,
  add font/.code=%
    \expandafter\def\expandafter\tikz@textfont\expandafter{\tikz@textfont#1},
  left delimiter/.style 2 args={append after command={\tikz@delimiter{south east}
    {south west}{every delimiter,every left delimiter,#2}{south}{north}{#1}{.}{\pgf@y}}}}
\tikzstyle{sms} = [rectangle callout, draw,very thick, rounded corners, minimum height=20pt]
\makeatletter
\tikzset{anchor/.append code=\let\tikz@auto@anchor\relax,
  add font/.code=%
    \expandafter\def\expandafter\tikz@textfont\expandafter{\tikz@textfont#1},
  left delimiter/.style 2 args={append after command={\tikz@delimiter{south east}
    {south west}{every delimiter,every left delimiter,#2}{south}{north}{#1}{.}{\pgf@y}}}}
\tikzstyle{sms} = [rectangle callout, draw,very thick, rounded corners, minimum height=20pt]
\usetikzlibrary{positioning,calc}

\tikzset{l3 switch/.style={
    parallelepiped,fill=switch, draw=white,
    minimum width=0.75cm,
    minimum height=0.75cm,
    parallelepiped offset x=1.75mm,
    parallelepiped offset y=1.25mm,
    path picture={
      \node[fill=white,
        circle,
        minimum size=6pt,
        inner sep=0pt,
        append after command={
          \pgfextra{
            \foreach \angle in {0,45,...,360}
            \draw[-latex,fill=white] (\tikzlastnode.\angle)--++(\angle:2.25mm);
          }
        }
      ]
       at ([xshift=-0.75mm,yshift=-0.5mm]path picture bounding box.center){};
    }
  },
  ports/.style={
    line width=0.3pt,
    top color=gray!20,
    bottom color=gray!80
  },
  rack switch/.style={
    parallelepiped,fill=white, draw,
    minimum width=1.25cm,
    minimum height=0.25cm,
    parallelepiped offset x=2mm,
    parallelepiped offset y=1.25mm,
    xscale=-1,
    path picture={
      \draw[top color=gray!5,bottom color=gray!40]
      (path picture bounding box.south west) rectangle
      (path picture bounding box.north east);
      \coordinate (A-west) at ([xshift=-0.2cm]path picture bounding box.west);
      \coordinate (A-center) at ($(path picture bounding box.center)!0!(path
        picture bounding box.south)$);
      \foreach \x in {0.275,0.525,0.775}{
        \draw[ports]([yshift=-0.05cm]$(A-west)!\x!(A-center)$)
          rectangle +(0.1,0.05);
        \draw[ports]([yshift=-0.125cm]$(A-west)!\x!(A-center)$)
          rectangle +(0.1,0.05);
       }
      \coordinate (A-east) at (path picture bounding box.east);
      \foreach \x in {0.085,0.21,0.335,0.455,0.635,0.755,0.875,1}{
        \draw[ports]([yshift=-0.1125cm]$(A-east)!\x!(A-center)$)
          rectangle +(0.05,0.1);
      }
    }
  },
  server/.style={
    parallelepiped,
    fill=white, draw,
    minimum width=0.35cm,
    minimum height=0.75cm,
    parallelepiped offset x=3mm,
    parallelepiped offset y=2mm,
    xscale=-1,
    path picture={
      \draw[top color=gray!5,bottom color=gray!40]
      (path picture bounding box.south west) rectangle
      (path picture bounding box.north east);
      \coordinate (A-center) at ($(path picture bounding box.center)!0!(path
        picture bounding box.south)$);
      \coordinate (A-west) at ([xshift=-0.575cm]path picture bounding box.west);
      \draw[ports]([yshift=0.1cm]$(A-west)!0!(A-center)$)
        rectangle +(0.2,0.065);
      \draw[ports]([yshift=0.01cm]$(A-west)!0.085!(A-center)$)
        rectangle +(0.15,0.05);
      \fill[black]([yshift=-0.35cm]$(A-west)!-0.1!(A-center)$)
        rectangle +(0.235,0.0175);
      \fill[black]([yshift=-0.385cm]$(A-west)!-0.1!(A-center)$)
        rectangle +(0.235,0.0175);
      \fill[black]([yshift=-0.42cm]$(A-west)!-0.1!(A-center)$)
        rectangle +(0.235,0.0175);
    }
  },
}

\usetikzlibrary{calc, shadings, shadows, shapes.arrows}

\tikzset{%
  interface/.style={draw, rectangle, rounded corners, font=\LARGE\sffamily},
  ethernet/.style={interface, fill=yellow!50},
  serial/.style={interface, fill=green!70},
  speed/.style={sloped, anchor=south, font=\large\sffamily},
  route/.style={draw, shape=single arrow, single arrow head extend=4mm,
    minimum height=1.7cm, minimum width=3mm, white, fill=switch!20,
    drop shadow={opacity=.8, fill=switch}, font=\tiny}
}
\newcommand*{\shift}{1.3cm}
\newcommand\numeq[1]%
{\stackrel{\scriptscriptstyle(\mkern-1.5mu#1\mkern-1.5mu)}{=}}
\newcommand\numeqq[1]%
{\stackrel{\scriptscriptstyle(\mkern-1.5mu#1\mkern-1.5mu)}{\triangleq}}
\newcommand\numleq[1]%
{\stackrel{\scriptscriptstyle(\mkern-1.5mu#1\mkern-1.5mu)}{\leq}}
\newcommand\numgeq[1]%
{\stackrel{\scriptscriptstyle(\mkern-1.5mu#1\mkern-1.5mu)}{\geq}}
\newcommand\numimp[1]%
{\stackrel{\scriptscriptstyle(\mkern-1.5mu#1\mkern-1.5mu)}{\implies}}
\newcommand\norm[1]{\lVert#1\rVert}
\newcommand*{\router}[1]{
\begin{tikzpicture}
  \coordinate (ll) at (-3,0.5);
  \coordinate (lr) at (3,0.5);
  \coordinate (ul) at (-3,2);
  \coordinate (ur) at (3,2);
  \shade [shading angle=90, left color=switch, right color=white] (ll)
    arc (-180:-60:3cm and .75cm) -- +(0,1.5) arc (-60:-180:3cm and .75cm)
    -- cycle;
  \shade [shading angle=270, right color=switch, left color=white!50] (lr)
    arc (0:-60:3cm and .75cm) -- +(0,1.5) arc (-60:0:3cm and .75cm) -- cycle;
  \draw [thick] (ll) arc (-180:0:3cm and .75cm)
    -- (ur) arc (0:-180:3cm and .75cm) -- cycle;
  \draw [thick, shade, upper left=switch, lower left=switch,
    upper right=switch, lower right=white] (ul)
    arc (-180:180:3cm and .75cm);
  \node at (0,0.5){\color{blue!60!black}\Huge #1};
  \begin{scope}[yshift=2cm, yscale=0.28, transform shape]
    \node[route, rotate=45, xshift=\shift] {\strut};
    \node[route, rotate=-45, xshift=-\shift] {\strut};
    \node[route, rotate=-135, xshift=\shift] {\strut};
    \node[route, rotate=135, xshift=-\shift] {\strut};
  \end{scope}
\end{tikzpicture}}

\definecolor{switch}{HTML}{006996}

\tikzset{l3 switch/.style={
    parallelepiped,fill=switch, draw=white,
    minimum width=0.75cm,
    minimum height=0.75cm,
    parallelepiped offset x=1.75mm,
    parallelepiped offset y=1.25mm,
    path picture={
      \node[fill=white,
      circle,
      minimum size=6pt,
      inner sep=0pt,
      append after command={
        \pgfextra{
          \foreach \angle in {0,45,...,360}
          \draw[-latex,fill=white] (\tikzlastnode.\angle)--++(\angle:2.25mm);
        }
      }
      ]
      at ([xshift=-0.75mm,yshift=-0.5mm]path picture bounding box.center){};
    }
  },
  ports/.style={
    line width=0.3pt,
    top color=gray!20,
    bottom color=gray!80
  },
  rack switch/.style={
    parallelepiped,fill=white, draw,
    minimum width=1.25cm,
    minimum height=0.25cm,
    parallelepiped offset x=2mm,
    parallelepiped offset y=1.25mm,
    xscale=-1,
    path picture={
      \draw[top color=gray!5,bottom color=gray!40]
      (path picture bounding box.south west) rectangle
      (path picture bounding box.north east);
      \coordinate (A-west) at ([xshift=-0.2cm]path picture bounding box.west);
      \coordinate (A-center) at ($(path picture bounding box.center)!0!(path
      picture bounding box.south)$);
      \foreach \x in {0.275,0.525,0.775}{
        \draw[ports]([yshift=-0.05cm]$(A-west)!\x!(A-center)$)
        rectangle +(0.1,0.05);
        \draw[ports]([yshift=-0.125cm]$(A-west)!\x!(A-center)$)
        rectangle +(0.1,0.05);
      }
      \coordinate (A-east) at (path picture bounding box.east);
      \foreach \x in {0.085,0.21,0.335,0.455,0.635,0.755,0.875,1}{
        \draw[ports]([yshift=-0.1125cm]$(A-east)!\x!(A-center)$)
        rectangle +(0.05,0.1);
      }
    }
  },
  server/.style={
    parallelepiped,
    fill=white, draw,
    minimum width=0.35cm,
    minimum height=0.75cm,
    parallelepiped offset x=3mm,
    parallelepiped offset y=2mm,
    xscale=-1,
    path picture={
      \draw[top color=gray!5,bottom color=gray!40]
      (path picture bounding box.south west) rectangle
      (path picture bounding box.north east);
      \coordinate (A-center) at ($(path picture bounding box.center)!0!(path
      picture bounding box.south)$);
      \coordinate (A-west) at ([xshift=-0.575cm]path picture bounding box.west);
      \draw[ports]([yshift=0.1cm]$(A-west)!0!(A-center)$)
      rectangle +(0.2,0.065);
      \draw[ports]([yshift=0.01cm]$(A-west)!0.085!(A-center)$)
      rectangle +(0.15,0.05);
      \fill[black]([yshift=-0.35cm]$(A-west)!-0.1!(A-center)$)
      rectangle +(0.235,0.0175);
      \fill[black]([yshift=-0.385cm]$(A-west)!-0.1!(A-center)$)
      rectangle +(0.235,0.0175);
      \fill[black]([yshift=-0.42cm]$(A-west)!-0.1!(A-center)$)
      rectangle +(0.235,0.0175);
    }
  },
}

\makeatletter
\pgfdeclareradialshading[tikz@ball]{cloud}{\pgfpoint{-0.275cm}{0.4cm}}{%
  color(0cm)=(tikz@ball!75!white);
  color(0.1cm)=(tikz@ball!85!white);
  color(0.2cm)=(tikz@ball!95!white);
  color(0.7cm)=(tikz@ball!89!black);
  color(1cm)=(tikz@ball!75!black)
}
\tikzoption{cloud color}{\pgfutil@colorlet{tikz@ball}{#1}%
  \def\tikz@shading{cloud}\tikz@addmode{\tikz@mode@shadetrue}}
\makeatother

\tikzset{my cloud/.style={
     cloud, draw, aspect=2,
     cloud color={gray!5!white}
  }
}
\definecolor{gray2}{HTML}{ededed}
\definecolor{gray3}{HTML}{F5F5F5}
\definecolor{RoyalAzure}{rgb}{0.0, 0.22, 0.66}

\tikzset{
  mybackground18/.style={execute at end picture={
      \begin{scope}[on background layer]
        \draw[black, fill=gray3, rounded corners=3.5ex] (current bounding box.south west)
        rectangle (current bounding box.north east);
        \node[draw,fill=white,ellipse,anchor=west,inner sep=1pt,minimum width=4ex] at (current bounding box.north
        west){#1};
      \end{scope}
    }}
}
\usepackage{pgfplotstable}
\usetikzlibrary{spy}

\tikzset{
  mybackground11/.style={execute at end picture={
        \begin{scope}[on background layer]
          \draw[black, fill=Black!80!Sepia!9, rounded corners=6ex] (current bounding box.south west)
                    rectangle (current bounding box.north east);
          \node[draw,fill=white,ellipse,anchor=west,inner sep=1pt,minimum width=4ex] at (current bounding box.north
                   west){#1};
        \end{scope}
    }},
}

\tikzset{
  mybackground13/.style={execute at end picture={
        \begin{scope}[on background layer]
          \draw[black, fill=gray2, rounded corners=4ex] (current bounding box.south west)
                    rectangle (current bounding box.north east);
          \node[draw,fill=white,ellipse,anchor=west,inner sep=1pt,minimum width=4ex] at (current bounding box.north
                   west){#1};
        \end{scope}
    }},
}

\tikzset{
  mybackground14/.style={execute at end picture={
        \begin{scope}[on background layer]
          \draw[black, rounded corners=2ex] (current bounding box.south west)
                    rectangle (current bounding box.north east);
          \node[draw,fill=white,ellipse,anchor=west,inner sep=1pt,minimum width=4ex] at (current bounding box.north
                   west){#1};
        \end{scope}
    }},
}

\definecolor{lightgray}{gray}{0.9}
\definecolor{lightgreen}{rgb}{0.88, 1, 0.88}
\definecolor{lightred}{rgb}{1, 0.88, 0.88}
\definecolor{bluetwo}{RGB}{189, 213, 234}
\definecolor{bluethree}{RGB}{165, 193, 224}
\definecolor{bluefour}{RGB}{141, 169, 200}

\newcommand{\cmark}{\textcolor{green}{\ding{51}}}
\newcommand{\xmark}{\textcolor{red}{\ding{55}}}
\newcommand{\qmark}{\textcolor{Blue}{\textbf{?}}}
\usepackage{colortbl}

\lstdefinestyle{pythonstyle}{
  basicstyle=\scriptsize\ttfamily,
  language=Python,
  keywordstyle=\color{blue}\bfseries,
  commentstyle=\color{gray}\itshape,
  stringstyle=\color{Red},
  showstringspaces=false,
  frame=single,
  rulecolor=\color{black!30},
  breaklines=true,
  tabsize=2,
  captionpos=b
}

\usepackage[accepted]{mlsys2025}
\mlsystitlerunning{The Cyber Security Learning Environment (CSLE)}
\allowdisplaybreaks

\usepackage{eso-pic}
\AddToShipoutPictureBG*{%
  \put(\LenToUnit{\paperwidth-75pt},\LenToUnit{\paperheight-65pt}){%
    \href{https://www.acm.org/publications/policies/artifact-review-and-badging-current}{%
      \includegraphics[width=65pt]{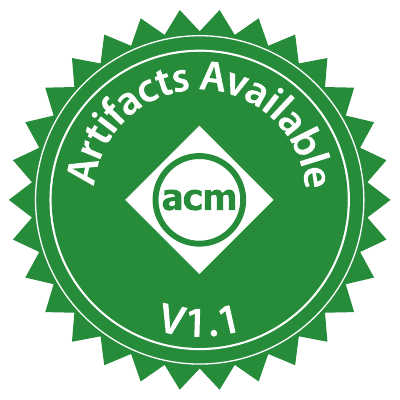}%
    }%
  }%
}

\begin{document}
\twocolumn[
\mlsystitle{CSLE: A Reinforcement Learning Platform \\for Autonomous Security Management}
\begin{mlsysauthorlist}
\mlsysauthor{Kim Hammar}{to}
\end{mlsysauthorlist}

\mlsysaffiliation{to}{Division of Network and Systems Engineering, KTH Royal Institute of Technology, Sweden}
\mlsyscorrespondingauthor{Kim Hammar}{kimham@kth.se}
\mlsyskeywords{Machine Learning, MLSys}

\vskip 0.3in
\begin{abstract}
Reinforcement learning is a promising approach to autonomous and adaptive security management in networked systems. However, current reinforcement learning solutions for security management are mostly limited to simulation environments and it is unclear how they generalize to operational systems. In this paper, we address this limitation by presenting CSLE: a reinforcement learning platform for autonomous security management that enables experimentation under realistic conditions. Conceptually, CSLE encompasses two systems. First, it includes an emulation system that replicates key components of the target system in a virtualized environment. We use this system to gather measurements and logs, based on which we identify a system model, such as a Markov decision process. Second, it includes a simulation system where security strategies are efficiently learned through simulations of the system model. The learned strategies are then evaluated and refined in the emulation system to close the gap between theoretical and operational performance. We demonstrate CSLE through four use cases: flow control, replication control, segmentation control, and recovery control. Through these use cases, we show that CSLE enables near-optimal security management in an environment that approximates an operational system.
\end{abstract}
]

\printAffiliationsAndNotice{}

\section{Introduction}\label{sec:intro}
Managing the security of networked systems alongside their service requirements and physical infrastructures is a major technical challenge that has grown exponentially with the rise of cloud computing, distributed networks, and IoT services. Examples of security management tasks include incident response, risk analysis, strategy design, and threat hunting. Today, many of these tasks remain manual processes carried out by security experts. Although this approach can be effective, it is labor-intensive and requires significant skills. For example, a recent study reports a global shortage of more than $4$ million security experts \cite{ISC2_2024_Workforce_Study}. 

A promising approach to address this challenge is to use reinforcement learning to automatically derive effective security strategies. For example, \citet{li2024conjectural} use reinforcement learning to compute effective incident response strategies. Similarly, \citet{kiely2025cage} use multi-agent reinforcement learning to derive effective defense strategies against advanced persistent threats. A comprehensive review of these developments is provided by \citet{deep_rl_cyber_sec}. While these works report encouraging results, key challenges remain. Chief among them is narrowing the gap between the environment where strategies are evaluated and a scenario playing out in an operational system. Most of the results obtained so far are limited to simulation environments, leaving their practical utility unproven.
\begin{figure}
  \centering
  \scalebox{0.86}{
    \input{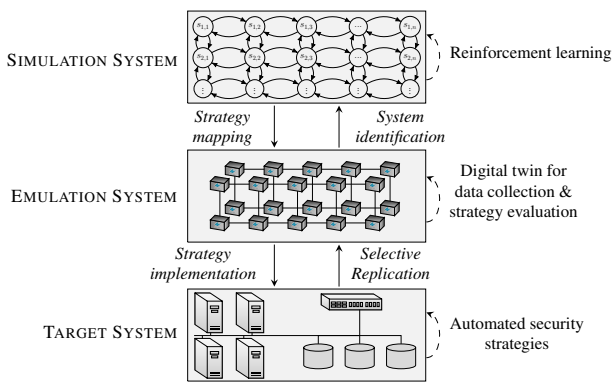}
 }
\caption{Architectural overview of CSLE: a reinforcement learning platform for autonomous security management.}
\label{fig:method}
\end{figure}

In this paper, we address this limitation by presenting a platform that enables experimentation with reinforcement learning in realistic conditions. Conceptually, the platform consists of two systems, as illustrated in Fig.~\ref{fig:method}. First, we use an \textit{emulation system} for creating a virtual replica (i.e., a \textit{digital twin}) of the target system. This twin closely approximates the functionality and timing behavior of the target system, which allows us to run attack scenarios in a controlled environment. Such runs produce system measurements and logs, based on which we identify a system model, e.g., a Markov decision process. Second, we use a \textit{simulation system} where near-optimal security strategies are incrementally learned through reinforcement learning. Learned strategies are extracted from the simulation system and evaluated in the digital twin. This process can be performed iteratively to provide progressively better security strategies that are adapted to changes in the target system, such as configuration changes and software updates.

We refer to the platform as CSLE, which stands for the ``\textit{\textbf{C}yber \textbf{S}ecurity \textbf{L}earning \textbf{E}nvironment}.'' CSLE includes an initial set of 15 digital twin configurations, more than 50 simulated security scenarios, 34 implemented reinforcement learning algorithms, and 4 implemented system identification algorithms, all of which can be extended. To evaluate CSLE experimentally, we use it to learn effective security strategies for four security management tasks: flow control, replication control, segmentation control, and recovery control. Through these use cases, we show that CSLE enables autonomous security management in an environment that closely approximates an operational system. Moreover, we demonstrate the broad applicability of CSLE and its integration with various reinforcement learning techniques.

In summary, the main contributions of this paper are:
\begin{itemize}
\item We present CSLE, a reinforcement learning platform for autonomous security management that enables experimentation under realistic operating conditions.
\item We evaluate CSLE on four different use cases: flow control, replication control, segmentation control, and recovery control. Our experimental results show that CSLE enables autonomous security management.
\end{itemize}  
\paragraph{Open source} The source code of CSLE is released under the CC-BY-SA 4.0 license and available in the repository at \cite{csle_docs}. In addition to the source code, this repository also includes video demonstrations, Docker images, documentation, and datasets of system traces.
\begin{table*}
  \centering
  \scalebox{0.82}{
\begin{tabular}{lcccccccc}
\toprule
\rowcolor{lightgray}
\textbf{Platform} & \textbf{Simulation} & \textbf{Emulation} & \textbf{Open source} & \textbf{RL Library} & \textbf{Management} & \textbf{Validated} & \textbf{Maintained} & \textbf{Distributed} \\
\midrule
\rowcolor{lightgreen}
CSLE (our platform) & \cmark & \cmark & \cmark & \cmark & \cmark & \cmark & \cmark & \cmark\\
CyberBattleSim & \cmark & \xmark & \cmark & \cmark  & \xmark & \xmark & \cmark & \xmark \\
CyBorg & \cmark & \xmark & \cmark & \xmark & \xmark & \xmark & \xmark & \xmark \\
Yawning Titan & \cmark & \xmark & \cmark & \xmark & \xmark & \xmark & \xmark & \xmark \\
NaSim & \cmark & \xmark & \cmark & \xmark & \xmark & \xmark & \xmark & \xmark \\
ATMoS & \xmark & \cmark & \cmark & \xmark & \xmark & \cmark & \xmark & \xmark \\
Gym-FlipIt & \cmark & \xmark & \cmark & \xmark & \xmark & \xmark & \xmark & \xmark \\
Gym-IDSgame & \cmark & \xmark & \cmark & \xmark & \xmark & \xmark & \xmark & \xmark \\
MAB-Malware & \xmark & \cmark & \cmark & \xmark & \xmark & \cmark & \cmark & \xmark \\
Malware-RL & \xmark & \cmark & \cmark & \xmark & \xmark & \cmark & \cmark & \xmark \\
PenGym & \cmark & \cmark & \cmark & \xmark & \xmark & \xmark & \cmark & \xmark \\
CyGil & \xmark & \cmark & \xmark & \xmark & \xmark & \xmark & \qmark & \qmark \\
NaSimEmu & \cmark & \cmark & \cmark & \xmark & \xmark & \xmark & \cmark & \xmark \\
Farland & \cmark & \cmark & \xmark & \xmark & \xmark & \cmark & \qmark & \qmark \\
CyberWheel & \cmark & \cmark & \cmark & \xmark & \xmark & \cmark & \cmark & \xmark \\
CyberShield & \cmark & \xmark & \xmark & \xmark & \xmark & \xmark & \cmark & \xmark \\
Cyborg++ & \cmark & \xmark & \cmark & \xmark & \xmark & \xmark & \cmark & \xmark \\
CyGym & \cmark & \xmark & \cmark & \xmark & \xmark & \xmark & \cmark & \xmark \\
C-CyberBattleSim & \cmark & \xmark & \cmark & \xmark & \xmark & \xmark & \cmark & \xmark \\   
\bottomrule
\end{tabular}
}
\caption{Comparison between reinforcement learning platforms for autonomous security management based on key features: support for simulation-based optimization; support for emulation-based evaluation; open source code; whether the platform provides a library with implemented reinforcement learning algorithms to facilitate strategy optimization and system identification; whether the platform provides a management system for automating experiments and debugging strategies; whether the platform has been experimentally validated on practical use cases; whether the platform is actively maintained; and whether the platform supports distributed deployment.}\label{tab:platform_comparisons}
\end{table*}
\section{Autonomous Security Management through Reinforcement Learning}\label{sec:formalization}
Before presenting our platform, we start by formulating security management as a reinforcement learning problem. To accomplish this formulation, we need a vocabulary in which to talk about the systems and actors involved. To this end, we refer to the operator of the target system as the \textit{defender}, and we refer to an entity aiming to attack the system as the \textit{attacker}. Both interact with the system by taking \textit{actions} (e.g., attacks and responses), which affect the system's \textit{state} (e.g., the system's security and service status). When selecting these actions, the defender and the attacker use measurements from the system (e.g., log files and security alerts), which we refer to as \textit{observations}. A function that maps a sequence of observations to an action is called a \textit{strategy}, and a strategy that is most advantageous according to some objective is \textit{optimal}.

\begin{figure}[H]
  \centering
  \scalebox{0.79}{
    \input{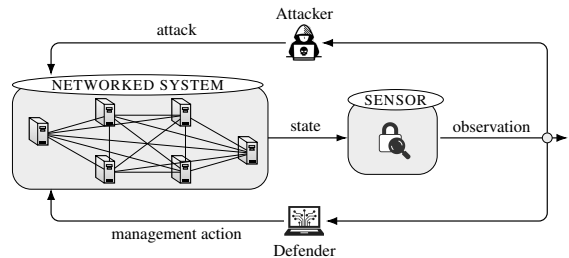}
 }
\caption{Autonomous and adaptive security management of a networked system as a reinforcement learning problem.}
\label{fig:problem_form}
\end{figure}

In the context of reinforcement learning, we can view security management as the problem of learning an effective defender strategy through repeated interaction with the system. In particular, by observing how the system responds to different actions, the defender can gradually improve its strategy to meet security objectives. However, many system-level challenges are encountered when applying this approach in practice. Chief among them are:
\begin{enumerate}
\item In an operational networked system, attacks and defender actions unfold over long time scales and can disrupt critical services. These factors make direct interaction with the system impractical for reinforcement learning, which typically requires executing thousands of actions to learn an effective strategy.  
\item To enable safe and efficient learning, the learning process must, therefore, be executed in a simulation environment. However, the system’s behavior is often too complex to model, which means that the simulation dynamics must be estimated from system measurements.
\item After learning a security strategy through simulation, it must be experimentally validated. In particular, the validity of the simulation must be verified by evaluating the learned strategy in an environment that closely approximates the target system.
\end{enumerate}

In the next section, we review existing platforms that attempt to address these challenges and explain their limitations. We then introduce our platform (CSLE), which is designed to overcome these limitations and enable reinforcement learning experimentation in realistic operating conditions.

\section{Reinforcement Learning Platforms for Autonomous Security Management}
Over the past $5$ years, several reinforcement learning environments for autonomous security management have been developed. They include CyberBattleSim by Microsoft \cite{microsoft_red_teaming}, CyBorg by the Australian department of defense \cite{cyborg}, NaSim by the University of Queensland \cite{schwartz_2020}, Yawning Titan by the UK defense science and technology laboratory \cite{causal_neil_agent}, CyGil by Canada's department of defense \cite{li2021cygil}, NaSimEmu by the Czech Technical University in Prague \cite{janisch2023nasimemu}, ATMoS by the University of Waterloo \cite{atmos}, Gym-FlipIt by Northeastern University \cite{qflip}, Gym-IDSgame by KTH Royal Institute of Technology \cite{hammar_stadler_cnsm_20}, MAB-Malware by the University of California (Riverside) \cite{mab_malware}, Malware-RL by the University of Virginia \cite{anderson2018learningevadestaticpe}, PenGym by Japan's advanced institute of science and technology \cite{pengym}, Farland by USA's national security agency \cite{farland}, CyberWheel by the Oak Ridge national laboratory \cite{cyberwheel}, CyberShield by the University of Malaga \cite{cybershield}, Cyborg++ by the Alan Turing Institute \cite{emerson2024cyborgenhancedgymdevelopment}, CyGym by Washington University \cite{lanier2025cygymsimulationbasedgametheoreticanalysis}, and C-CyberBattleSim by the University of Lorraine \cite{terranova:hal-05182437}.

Like CSLE, all of the referenced platforms include capabilities for learning security strategies using reinforcement learning. However, they differ from CSLE in several important ways, as highlighted in Table~\ref{tab:platform_comparisons}. First, most existing platforms are confined to simulations. By contrast, CSLE is centered around an emulation system based on virtualization. The benefit of our approach is that it narrows the gap between the environment where security strategies are evaluated and a scenario playing out in an operational system. Second, many of the referenced platforms are not open source and most of them are no longer maintained. By contrast, CSLE is open source and has an active development community. Third, CSLE has been experimentally validated on a range of practical use cases, whereas most other platforms have only been evaluated on a single simulation use case. Fourth, unlike the other platforms, CSLE supports distributed deployment, which improves scalability. Moreover, CSLE incorporates a novel management system that provides infrastructure for automating reinforcement learning experiments and debugging the learned strategies.

Lastly, we note that a few platforms for autonomous system operations based on large language models (LLMs) have recently emerged, most notably ITBench~\cite{itbench} and AIOpsLab~\cite{aiopslab}. These platforms focus on using LLMs to automate general system operation tasks. By contrast, CSLE is explicitly designed for automating security management tasks. Another difference is that the referenced platforms are designed for evaluating LLM-based agents, whereas CSLE is designed for developing reinforcement learning agents. This difference in scope leads to fundamental differences in platform architecture. In particular, LLMs do not require the same system-level support as reinforcement learning agents do. For example, CSLE supports identifying simulation models, optimizing strategies through reinforcement learning, and transferring strategies from simulation to emulation. None of these functions is provided by ITBench and AIOpsLab.

\section{Architecture of CSLE}\label{sec:end_to_end}
The architecture of CSLE is illustrated in Fig.~\ref{fig:method} and is centered around an emulation system for creating a digital twin, i.e., a virtual replica of the target system.\footnote{By \textit{target system}, we mean the system where the learned security strategies are intended to be deployed.} We use this twin to run automated attack scenarios and defender responses. Such runs produce system measurements and logs, from which we estimate infrastructure statistics. These statistics allow us to instantiate a mathematical model of the target system through \textit{system identification}. We then leverage this model to learn effective security strategies through simulation, whose performance is assessed using the digital twin. This closed-loop process can be executed iteratively to provide progressively better security strategies that are adapted to changes in the target system; see Fig.~\ref{fig:digital_twin}.
\begin{figure}[H]
  \centering
  \scalebox{1}{
    \input{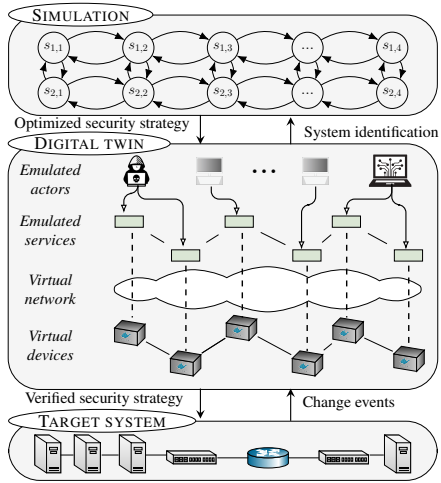}
 }
\caption{A digital twin in CSLE is a virtual replica of a target system that runs the same software and configuration, but on virtualized hardware. Moreover, the twin controls network delays and emulates actors to replicate operational workloads. The twin is used in CSLE for strategy evaluation and system identification.}
\label{fig:digital_twin}
\end{figure}

\subsection{Emulation System}
As described above, the emulation system in CSLE is used to create a digital twin of the target system. The concept of a \textit{digital twin} emerged in the 1960s when NASA used virtual environments to evaluate failure scenarios for lunar landers \cite{nasa_dt}. Since then, digital twin has emerged as a key technology in automation and has been adopted in several industries, including the manufacturing industry [see e.g., \cite{digital_twin_industry}], the automotive industry [see e.g., \cite{digital_twin_automotive}], the healthcare industry [see e.g., \cite{digital_twin_healthcare}], and the technology industry [see e.g., \cite{9429703}].

In CSLE, a digital twin is a virtual replica of a networked system that provides a controlled environment for virtual operations (e.g., cyberattacks and responses), the outcomes of which can be used to optimize operations in the target system. Such a twin enables us to systematically test security strategies under different conditions, including varying attacks, workloads, and network latencies.

\subsection{Simulation System}
The simulation system in CSLE is used to run simulations and execute reinforcement learning algorithms. Although these algorithms could in principle be executed in the digital twin, this approach is not practical due to the long execution times required for carrying out actions and collecting observations in the digital twin. For instance, executing a cyberattack or a defensive reconfiguration in a digital twin can take several minutes. In contrast, the simulation system abstracts these processes as actions in a Markov decision process, which reduces the execution time to milliseconds.

With a \textit{simulation}, we mean an execution of a \textit{discrete-time dynamical system} of the form
\begin{align}
s_{t+1} &\sim f(s_t, a^{(\mathrm{D})}_t, a_t^{(\mathrm{A})}),\label{eq:dt_system}
\end{align}
where $s_t$ is the system state at time $t$, $a^{(\mathrm{D})}_t$ is the defender action, $a^{(\mathrm{A})}_t$ is the attacker action, $f$ is the system dynamics, and $s \sim f$ means that $s$ is sampled from $f$. For example, the dynamics $f$ may represent a Markov decision process (MDP) or a Markov game. Each simulation path $s_1,s_2,\hdots,s_t$ is associated with security consequences and costs. The goal of reinforcement learning is to identify the defender actions that control the simulation in an optimal manner according to a specified security objective.

\subsection{Reinforcement Learning Methodology}
The emulation and simulation systems in CSLE enable a reinforcement learning methodology with the following steps.
\begin{enumerate}[leftmargin=1cm]
\item[\textbf{Step 1}] \textit{Defining the target system}.
\begin{itemize}
\item This is the system where the learned security strategies are intended to be deployed. In CSLE, the target system is defined through a configuration file that specifies the system components, the network topology, the services, etc.
\end{itemize}
\item[\textbf{Step 2}] \textit{Creating a digital twin of the target system.}
\begin{itemize}
\item Given the target system specification, the creation of a digital twin in CSLE is automated through the emulation system.
\end{itemize}
\item[\textbf{Step 3}] \textit{Collecting data from the digital twin.}
\begin{itemize}
\item After creating the digital twin, we use it to run attack scenarios. Such runs produce system traces (i.e., sequences of system metrics), which we collect through CSLE's monitoring system.
\end{itemize}
\item[\textbf{Step 4}] \textit{Identifying a system model.}
\begin{itemize}
\item Having collected system measurements from the digital twin, we use the collected data to identify a model (e.g., through statistical learning) that can be used for running simulations, such as a Markov decision process (MDP).
\end{itemize}
\item[\textbf{Step 5}] \textit{Learning an effective security strategy.}
\begin{itemize}
\item Given the identified system model (e.g., an MDP), we apply reinforcement learning techniques to learn an effective security strategy.

\end{itemize}
\item[\textbf{Step 6}] \textit{Evaluating the learned strategy in the digital twin.}
\begin{itemize}
\item After the learning process has converged, we evaluate the learned security strategy in the digital twin. Such evaluation involves measuring system metrics from the digital twin in real time (e.g., security alerts), using them as input to the security strategy, and executing the action prescribed by the strategy in the digital twin.
\end{itemize}
\item[\textbf{Step 7}] \textit{Deploying the learned strategy in the target system.}
\begin{itemize}
\item If the evaluation is satisfactory, we deploy the learned strategy in the target system. Otherwise, we collect more data to update the simulation and then learn a new strategy. This procedure of updating the simulation and re-learning the strategy is repeated until a strategy with satisfactory performance is obtained.
\end{itemize}  
\end{enumerate}

\section{Implementation of CSLE}\label{sec:architecture}
We have implemented CSLE in Python [$\approx 275,000$ lines of code], JavaScript [$\approx 40,000$ lines of code], and Bash [$\approx 5,000$ lines of code]. From an architectural point of view, the implementation can be divided into three systems: the emulation system, the simulation system, and the management system; see Fig.~\ref{fig:csle_arch}. Broadly speaking, the emulation system creates digital twins, the simulation system runs reinforcement learning algorithms, and the management system orchestrates the platform. The rest of this section delves into the technical details of these three systems.

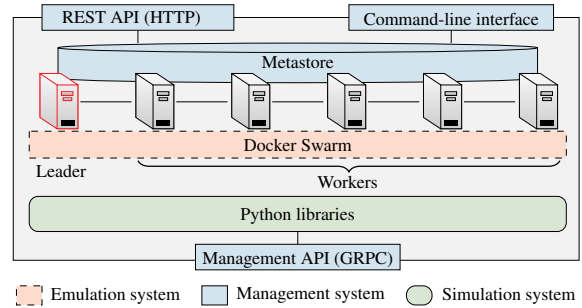
\begin{figure}[H]
  \centering
  \scalebox{0.85}{
          \begin{tikzpicture}[fill=white, >=stealth,
    node distance=3cm,
    database/.style={
      cylinder,
      cylinder uses custom fill,
      shape border rotate=90,
      aspect=0.25,
      draw}]

    \tikzset{
node distance = 9em and 4em,
sloped,
   box/.style = {%
    shape=rectangle,
    rounded corners,
    draw=blue!40,
    fill=blue!15,
    align=center,
    font=\fontsize{12}{12}\selectfont},
 arrow/.style = {%
    line width=0.1mm,
    -{Triangle[length=5mm,width=2mm]},
    shorten >=1mm, shorten <=1mm,
    font=\fontsize{8}{8}\selectfont},
}

\node[scale=1] (management_system) at (0,0)
{
  \begin{tikzpicture}

\draw[fill=black!5] (-0.85, -2.2) rectangle (8, 1.6) {};
\node[database, minimum width=7.5cm, minimum height=0.62cm,fill=MidnightBlue!12,align=center] (s6) at (3.6,0.84) {$\quad$};
\node[server, scale=0.9, color=Red](s1) at (0,0.2) {};
\node[server, scale=0.9](s2) at (1.5,0.2) {};
\node[server, scale=0.9](s3) at (3,0.2) {};
\node[server, scale=0.9](s4) at (4.5,0.2) {};
\node[server, scale=0.9](s5) at (6,0.2) {};
\node[server, scale=0.9](s6) at (7.5,0.2) {};

\node[inner sep=0pt,align=center, scale=0.8, color=black] (hacker) at (-0.1,-0.8) {Leader};

\draw[-, color=black] (0.2,0.28) to (1,0.28);
\draw[-, color=black] (1.7,0.28) to (2.5,0.28);
\draw[-, color=black] (3.2,0.28) to (4,0.28);
\draw[-, color=black] (4.7,0.28) to (5.5,0.28);
\draw[-, color=black] (6.2,0.28) to (7,0.28);
\node[inner sep=0pt,align=center, scale=0.8, color=black] (hacker) at (3.6,0.9) {Metastore};

\draw[fill=MidnightBlue!12] (2, -1.95) rectangle (5.2, -2.35) {};
\draw[fill=MidnightBlue!12] (-0.4, 1.4) rectangle (2.6, 1.8) {};
\draw[fill=MidnightBlue!12] (4.4, 1.4) rectangle (7.6, 1.8) {};

\draw[fill=Red!15, dashed] (-0.6, -0.18) rectangle (7.8, -0.6) {};

\draw[rounded corners, fill=OliveGreen!15] (-0.6, -1.2) rectangle (7.8, -1.75) {};

\node[inner sep=0pt,align=center, scale=0.8, color=black] (hacker) at (3.6,-1.5) {Python libraries};

\node[inner sep=0pt,align=center, scale=0.8, color=black] (hacker) at (3.6,-2.2) {Management API (GRPC)};
\node[inner sep=0pt,align=center, scale=0.8, color=black] (hacker) at (1,1.58) {REST API (HTTP)};
\node[inner sep=0pt,align=center, scale=0.8, color=black] (hacker) at (6,1.58) {Command-line interface};

\draw[-, color=black] (1.1,1.4) to (1.1,1.2);
\draw[-, color=black] (6,1.4) to (6,1.2);
\draw[-, color=black] (3.6,-1.75) to (3.6,-1.95);

\draw [decorate,decoration={brace,amplitude=5pt,mirror,raise=4pt},yshift=0pt,rotate=180, line width=0.20mm]
(-1.1,0.5) -- (-7.7,0.5) node [black,midway,xshift=0.1cm] {};

\node[inner sep=0pt,align=center, scale=0.8, color=black] (hacker) at (4.38,-0.97) {Workers};

\node[inner sep=0pt,align=center, scale=0.8, color=black] (hacker) at (3.6,-0.4) {Docker Swarm}; 

\draw[fill=Red!15, dashed] (-0.4, -2.6) rectangle (-0.8, -2.9) {};

\node[inner sep=0pt,align=center, scale=0.8, color=black] (hacker) at (0.75,-2.75) {Emulation system};

\draw[fill=MidnightBlue!12] (2.1, -2.6) rectangle (2.5, -2.9) {};

\node[inner sep=0pt,align=center, scale=0.8, color=black] (hacker) at (3.8,-2.75) {Management system};

\draw[fill=OliveGreen!15, rounded corners] (5.3, -2.6) rectangle (5.7, -2.9) {};

\node[inner sep=0pt,align=center, scale=0.8, color=black] (hacker) at (6.9,-2.75) {Simulation system};

\end{tikzpicture}
};

\end{tikzpicture}
 }
 \caption{The architecture of CSLE. It is a distributed platform with $N$ servers ($N=6$ in this example), which are connected through a database (the metastore) and a virtualization layer provided by Docker Swarm. CSLE has four interfaces: a Python API, a GRPC API, a REST API, and a command-line interface.}
\label{fig:csle_arch}
\end{figure}

\subsection{Infrastructure}
CSLE runs on a distributed system with $N \geq 1$ servers connected through an IP network. Each server runs a virtualization layer provided by Docker Swarm \cite{docker} and can be accessed through Python libraries, a web interface, a command-line interface, and a GRPC interface \cite{grpc}. Platform metadata is stored in a distributed database referred to as the \textit{metastore}, which is based on Citus \cite{citus}. This database consists of $N$ replicas, one per server. One replica is a designated \textit{leader} and is responsible for coordination. The others are \textit{workers}. A new leader is elected by a quorum whenever the current leader fails or becomes unresponsive. CSLE thus tolerates up to $\lfloor\frac{N-1}{2}\rfloor$ failing servers. This design enables horizontal scaling as the number of servers increases.

Deployment of CSLE in both on-premise and cloud infrastructures is automated using Ansible \cite{ansible}. This automation enables on-demand deployment, allowing CSLE to be launched dynamically for specific experiments or to run continuously as part of an operational environment.

\subsection{The Emulation System}\label{sec:emulation}
The purpose of the emulation system in CSLE is to create a digital twin that replicates relevant components of the target system. Creating such a twin involves three tasks: (\textit{i}) emulating the target system's physical infrastructure, such as processors, network interfaces, and network conditions; (\textit{ii}) emulating actors, i.e., attackers, defenders, and clients; and (\textit{iii}) instrumenting the twin with monitoring and management capabilities. Each of these tasks is detailed below.

\subsubsection*{Emulating hosts and switches}
We emulate hosts and switches with Docker containers \cite{docker}, i.e., lightweight executable packages that include runtime systems, code, libraries, and configurations. This virtualization lets us quickly instantiate large digital twins; see Fig.~\ref{fig:dt_scale}. Resource allocation to containers, e.g., CPU and memory, is enforced using Cgroups. Containers that emulate switches run OVS \cite{ovs} and connect to controllers through OpenFlow \cite{openflow}. Since the switches are programmed through flow tables, they can act as layer-two switches or as routers.

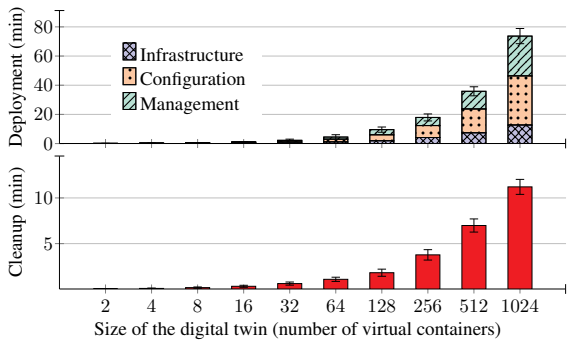
\begin{figure}[H]
  \centering
  \scalebox{0.8}{
    \begin{tikzpicture}
\node[scale=1] (kth_cr) at (0,0)
{
\begin{tikzpicture}    
\pgfplotstableread{
0.056   0.056   0.028 0.01
0.112   0.112   0.056 0.03
0.224   0.224   0.112 0.09
0.448   0.448   0.224 0.2
0.896   0.896   0.448 0.8
1.092   1.892   1.496 1.6
1.884   4.084   3.592 1.8
4.168   8.168   5.584 2.4
7.336  16.336  12.168 3.1
12.672  33.672  27.336 5.2
}\datatablee

\begin{axis}[
    ybar stacked,
    title style={align=center},
    ticks=both,
    xmin=-0.5,
    xmax=9.5,
    ymax=75,
    axis x line=bottom,
    axis y line=left,
    axis line style={-|},
    enlarge y limits={lower, value=0.1},
    enlarge y limits={upper, value=0.22},
    xtick=data,
    ymajorgrids,
    xticklabels={
    },
    y tick label style={align=center, xshift=0.08cm, scale=0.9},                    
    legend style={at={(0.4, 0.85)}, nodes={scale=0.9, transform shape}, anchor=north, legend columns=1,
                draw=none,
      anchor=north east,
      /tikz/every even column/.style={anchor=west, column sep=5pt},
      /tikz/every odd column/.style={anchor=west},          
            /tikz/column 2/.style={
                column sep=5pt,
              }
    },
    every axis legend/.append style={nodes={right}, inner sep=0.2cm},
    x tick label style={align=center, yshift=-0.1cm},
    enlarge x limits=0.05,
    width=10cm,
    height=3.8cm,
    bar width=0.4cm
    ]

\addplot[draw=black, fill=Periwinkle!40, postaction={pattern=crosshatch}] 
    table [x expr=\coordindex, y index=0] {\datatablee};

\addplot[draw=black, fill=Orange!40, postaction={pattern=dots}] 
    table [x expr=\coordindex, y index=1] {\datatablee};

\addplot[draw=black, fill=SeaGreen!40, postaction={pattern=north east lines}, error bars/.cd, y dir=both, y explicit] 
    table [x expr=\coordindex, y index=2, y error index=3] {\datatablee};

\legend{Infrastructure, Configuration, Management}

\end{axis}

\node[inner sep=0pt, align=center, scale=0.9, rotate=90, opacity=1] (obs) at (-0.7, 1.15)
{
  Deployment (min)
};
\end{tikzpicture}
};

\node[scale=1] (kth_cr) at (0,-2.7)
{
\begin{tikzpicture}    
\pgfplotstableread{
0.042 0.01
0.095 0.04
0.184 0.07
0.328 0.11
0.625 0.17
1.092 0.23
1.814 0.39
3.768 0.57
6.982 0.72
11.210 0.82
}\datatablee

\begin{axis}[
    ybar stacked,
    title style={align=center},
    ticks=both,
    xmin=-0.5,
    xmax=9.5,
    ymax=12,
    axis x line=bottom,
    axis y line=left,
    axis line style={-|},
    enlarge y limits={lower, value=0.1},
    enlarge y limits={upper, value=0.22},
    xtick=data,
    ymajorgrids,
    xticklabels={
        $2$,
        $4$,
        $8$,
        $16$,
        $32$,
        $64$,
        $128$,
        $256$,
        $512$,
        $1024$
      },
      x tick label style={align=center, yshift=0.1cm, scale=0.9},
      y tick label style={align=center, xshift=0.08cm, scale=0.9},                      
    legend style={at={(0.4, 0.75)}, nodes={scale=0.9, transform shape}, anchor=north, legend columns=1,
                draw=none,
      anchor=north east,
      /tikz/every even column/.style={anchor=west, column sep=5pt},
      /tikz/every odd column/.style={anchor=west},          
            /tikz/column 2/.style={
                column sep=5pt,
              }
    },
    every axis legend/.append style={nodes={right}, inner sep=0.2cm},
    x tick label style={align=center, yshift=-0.1cm},
    enlarge x limits=0.05,
    width=10cm,
    height=3.8cm,
    bar width=0.4cm
    ]

  \addplot[draw=black, fill=Red,error bars/.cd, y dir=both, y explicit] table [x expr=\coordindex, y index=0,y error index=1] {\datatablee};
\end{axis}

\node[inner sep=0pt, align=center, scale=0.9, rotate=0, opacity=1] (obs) at (4, -0.7)
{
Size of the digital twin (number of virtual containers)
};
\node[inner sep=0pt, align=center, scale=0.9, rotate=90, opacity=1] (obs) at (-0.7, 1.15)
{
  Cleanup (min)
};
\end{tikzpicture}
};

\end{tikzpicture}
 }
\caption{Time to deploy and cleanup a digital twin in CSLE. Deploying the twin involves creating containers, attaching them to networks, configuring them, and starting management services. Cleanup involves stopping and deleting containers and networks. The time measurements were performed for a digital twin with a single network running on a server with a $24$-core Intel Xeon Gold $2.10$ GHz CPU and $768$ GB RAM. Numbers and error bars indicate the mean and the standard deviation from $5$ measurements.}
\label{fig:dt_scale}
\end{figure}

The hosts and switches of the digital twin are specified through a configuration file written in Python, which CSLE parses before deploying the twin. We provide a code snippet of the configuration file in Listing \ref{lst:node_cfg}.
\begin{lstlisting}[caption={Python code for configuring a container.}, label={lst:node_cfg}]
from csle_common.dao.emulation_config.node_container_config import NodeContainerConfig
from csle_common.dao.emulation_config.node_firewall_config import NodeFirewallConfig
node_cfg = NodeContainerConfig(name="my-image", os="Ubuntu22", ips=[..], subnets=[..], interfaces=[..],
       cpus=1, memory_gb=4)
node_fw_config = NodeFirewallConfig(host="..",default_gw="",default_input="ACCEPT",default_output="ACCEPT",default_forward="ACCEPT",fw_rules=[..])
\end{lstlisting}

\subsubsection*{Emulating network links}
We emulate network connectivity in digital twins through virtual links implemented by Linux bridges and network namespaces. If an emulated network spans multiple physical servers, we tunnel the traffic over the physical network using VXLAN \cite{vxlan}. In other words, the physical network of the servers provides a substrate, on top of which the emulated networks are overlaid.

Network conditions of virtual links are created using the NetEm module in the Linux kernel \cite{netem}. This module allows setting bit rates, packet delays, packet loss probabilities, and jitter. For example, the standard configuration in CSLE emulates connections between servers in an IT system with full-duplex, lossless connections of 1 Gbit/s capacity in both directions. Similarly, the default configuration for external communications is full-duplex connections of $100$ Mbit/s capacity and $0.1\%$ packet loss with random bursts of $1\%$ packet loss. These numbers are based on measurements on enterprise and wide-area networks; see e.g., \cite{packet_losses_decreasing,Paxson97end-to-endinternet}.

The network conditions are configured in CSLE through Python objects. We provide an example in Listing \ref{lst:network_cfg}.
\begin{lstlisting}[caption={Python code for configuring a network interface.}, label={lst:network_cfg}]
from csle_common.dao.emulation_config.node_network_config import NodeNetworkConfig  
NodeNetworkConfig(interface="eth0", packet_delay_ms=2, jitter_ms=0.5, delay_distribution="pareto", corrupt=0.02, duplicate=0.00001, correlation=25, reorder=2, rate_limit_mbit=100)
\end{lstlisting}

\subsubsection*{Emulating actors}
All actors in CSLE are programmatically controlled through a management API based on GRPC, which allows changing configuration parameters, starting new actors, and stopping running ones. This automation enables attackers, clients, and defenders to operate in a fully autonomous environment.

We emulate clients through processes in the digital twin that access services on emulated hosts. The client population is defined by (\textit{i}) an arrival process (e.g., a Poisson process) that controls the rate at which new client processes are started; (\textit{ii}) a service time distribution (e.g., an exponential distribution) that controls how long a client will consume services before terminating; (\textit{iii}) a service configuration that specifies the services of the digital twin that clients will consume; and (\textit{iv}) a Markov process that controls the sequence of service invocations that a client makes. All of these parameters are configured in CSLE through a Python file. We provide a code snippet of this file in Listing~\ref{lst:client_cfg}.

\begin{lstlisting}[caption={Python code for configuring the client population.}, label={lst:client_cfg}]
from csle_collector.client_manager.dao.client import Client
from csle_collector.client_manager.dao.constant_arrival_config import ConstantArrivalConfig
clients=[Client(service_distribution=[0.5,0.2,0.3], arrival_config=ConstantArrivalConfig(lamb=20), mu=4, exponential_service_time=True)]
\end{lstlisting}

Figure \ref{fig:cpu_memory_usage} shows the resource usage of two digital twins as a function of the client arrival rate. We observe, as expected, that the resource usage increases with the load imposed on the twins. In particular, higher client arrival rates lead to increased CPU utilization since the twins must process a larger number of service requests. In contrast, the memory usage remains stable when increasing the load.

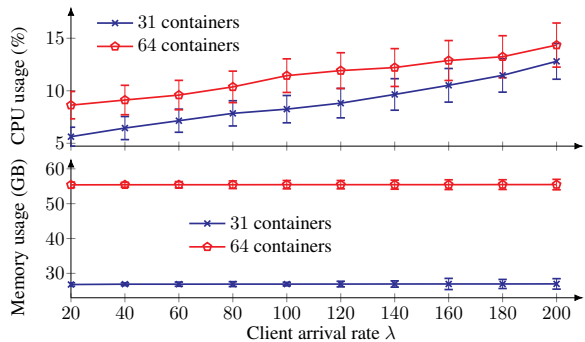
\begin{figure}[H]
  \centering
  \scalebox{0.8}{
    \begin{tikzpicture}
\pgfplotstableread{
20 5.63 26.8 0.9 0.5
40 6.45 26.9 1.1 0.4
60 7.15 26.9 1.1 0.6
80 7.85 26.92 1.2 0.7
100 8.25 26.92 1.3 0.5
120 8.82 26.93 1.4 0.8
140 9.65 26.95 1.5 0.9
160 10.52 26.96 1.6 1.6
180 11.47 26.96 1.6 1.3
200 12.8 26.98 1.7 1.5
}\resourceone

\pgfplotstableread{
20 8.63 55.4 1.3 0.8
40 9.12 55.42 1.4 0.7
60 9.59 55.43 1.4 0.9
80 10.37 55.43 1.5 1.1
100 11.44 55.45 1.6 1.2
120 11.92 55.45 1.7 1.2
140 12.21 55.45 1.8 1.3
160 12.88 55.45 1.9 1.4
180 13.24 55.47 2.0 1.4
200 14.35 55.48 2.1 1.5
}\resourcetwo

\pgfplotsset{/dummy/workaround/.style={/pgfplots/axis on top}}

\node[scale=1] (kth_cr) at (0,0)
{
\begin{tikzpicture}
  \begin{axis}
[
        xmin=20,
        xmax=210,
        width=10.1cm,
        height=3.9cm,
        ymax=18,
        axis y line=center,
        axis x line=bottom,
        scaled y ticks=false,
        xticklabels={},
        yticklabel style={
        /pgf/number format/fixed,
        /pgf/number format/precision=5
      },
        xlabel style={below right},
        ylabel style={above left},
        x tick label style={align=center, yshift=0.04cm, scale=0.9},
        y tick label style={align=center, xshift=0.08cm, scale=0.9},                
        axis line style={-{Latex[length=1.5mm]}},
        legend style={at={(0.35,1.0)}},
        legend columns=1,
        legend style={
          draw=none,
          nodes={scale=0.9, transform shape},
      anchor=north east,
      /tikz/every even column/.style={anchor=west, column sep=5pt},
      /tikz/every odd column/.style={anchor=west},          
            /tikz/column 2/.style={
                column sep=5pt,
              }
            },
            error bars/y dir=both,
error bars/y explicit,
error bars/error mark options={
  rotate=90, 
  mark size=2pt,
  thick
},
              ]
            \addplot[Blue,name path=l1, thick, mark=x, mark repeat=1] table [x index=0, y index=1, y error index=3] {\resourceone};
            \addplot[Red,name path=l1, thick, mark=pentagon, mark repeat=1] table [x index=0, y index=1, y error index=3] {\resourcetwo};
            \legend{$31$ containers, $64$ containers}
            \end{axis}
\node[inner sep=0pt,align=center, scale=0.9, rotate=90, opacity=1] (obs) at (-0.85,1.05)
{
CPU usage (\%)
};
\end{tikzpicture}
};

\node[scale=1] (kth_cr) at (0,-2.7)
{
\begin{tikzpicture}
  \begin{axis}
[
        xmin=20,
        xmax=210,
        width=10.1cm,
        height=3.9cm,
        ymax=63,
        ymin=23,
        axis y line=center,
        axis x line=bottom,
        scaled y ticks=false,
        yticklabel style={
        /pgf/number format/fixed,
        /pgf/number format/precision=5
      },
        xlabel style={below right},
        ylabel style={above left},
        x tick label style={align=center, yshift=0.04cm, scale=0.9},
        y tick label style={align=center, xshift=0.08cm, scale=0.9},                
        axis line style={-{Latex[length=1.5mm]}},
        legend style={at={(0.53,0.65)}},
        legend columns=1,
        legend style={
          draw=none,
          nodes={scale=0.9, transform shape},
      anchor=north east,
      /tikz/every even column/.style={anchor=west, column sep=5pt},
      /tikz/every odd column/.style={anchor=west},          
            /tikz/column 2/.style={
                column sep=5pt,
              }
            },
            error bars/y dir=both,
error bars/y explicit,
error bars/error mark options={
  rotate=90, 
  mark size=2pt,
  thick
},            
              ]
            \addplot[Blue,name path=l1, thick, mark=x, mark repeat=1] table [x index=0, y index=2, y error index=4] {\resourceone};
            \addplot[Red,name path=l1, thick, mark=pentagon, mark repeat=1] table [x index=0, y index=2, y error index=4] {\resourcetwo};
            \legend{$31$ containers, $64$ containers}
            \end{axis}
\node[inner sep=0pt,align=center, scale=0.9, rotate=90, opacity=1] (obs) at (-0.85,1.05)
{
Memory usage (GB)
};
\node[inner sep=0pt,align=center, scale=0.9, rotate=0, opacity=1] (obs) at (4.2,-0.57)
{
  Client arrival rate $\lambda$
};
\end{tikzpicture}
};

\end{tikzpicture}
 }
 \caption{Resource usage of two digital twins in function of the client (Poisson) arrival rate $\lambda$. Numbers and error bars indicate the mean and the standard deviation from $5$ evaluations. The CPU and memory usages are averaged over a monitoring period of $30$ minutes. The blue curves relate to digital twins of an IT infrastructure with $31$ and $64$ hosts, respectively. The network topologies are shown in Figs.~\ref{fig:systems}.a--b and the configurations are available in the supplementary material (Tables~\ref{tab:config1} and \ref{tab:config2}). Each client consumes a randomly selected service of the infrastructure for a time that is sampled from an exponential distribution with mean value $\mu=60$ seconds. We run the digital twins on a server with a $24$-core Intel Xeon Gold $2.10$ GHz CPU and $768$ GB RAM.}
\label{fig:cpu_memory_usage}
\end{figure}

Similar to how clients are emulated, attackers in CSLE are implemented as autonomous processes that execute actions from a pre-defined list, including reconnaissance commands, privilege escalation actions, and exploits. Table~\ref{tab:dt_attacker_actions} lists some of the attacker actions that are automated in CSLE. The defender is emulated in a similar way, with actions implemented as system commands that can reconfigure network components, isolate hosts, or perform other mitigation steps. We provide several examples of defender actions in Table~\ref{tab:defender_stop_actions}.

\begin{table}
  \centering
  \scalebox{0.7}{
    \begin{tabular}{lll} \toprule
\rowcolor{lightgray}      
  {\textit{Type}} & {\textit{Actions}} & {MITRE ATT\&CK technique} \\ \midrule
  Reconnaissance  & TCP SYN scan, UDP scan & T1046 service scanning.\\
                  & TCP XMAS scan & T1046 service scanning. \\
                  & Vulscan & T1595 active scanning. \\
                  & ping-scan & T1018 system discovery.\\\midrule
  Brute-force & Telnet, SSH & T1110 brute force.\\
                  & FTP, Cassandra & T1110 brute force.\\
                  &  IRC, MongoDB, MySQL & T1110 brute force.\\
                  & SMTP, Postgres & T1110 brute force.\\\midrule
  Exploit & CVE-2017-7494 & T1210 service exploitation.\\
                  &CVE-2015-3306 & T1210 service exploitation.\\
                  & CVE-2010-0426 & T1068 privilege escalation.\\
                  & CVE-2015-5602 & T1068 privilege escalation.\\
                  & CVE-2015-1427 & T1210 service exploitation.\\
                  & CVE-2014-6271 & T1210 service exploitation.\\
                  & CVE-2016-10033 & T1210 service exploitation.\\
                  & SQL injection & T1210 service exploitation. \\
  \bottomrule\\
\end{tabular}
}
\caption{Examples of attacker actions in CSLE; actions are identified by identifiers in the common vulnerabilities and exposures (CVE) database \cite{cve}; the actions are also linked to the corresponding attack techniques in the MITRE ATT\&CK taxonomy \cite{strom2018mitre}.}\label{tab:dt_attacker_actions}
\end{table}

\subsection{The Management System}\label{sec:management}
The role of the management system in CSLE is to support the operation of digital twins and facilitate end-to-end reinforcement learning experiments. In particular, the management system provides APIs for real-time monitoring and control of digital twins, as well as a web interface for managing reinforcement learning experiments and deployments.

Each emulated device in a digital twin runs a \textit{management agent}, which exposes a GRPC API \cite{grpc}. This API is invoked to perform control actions, e.g., restarting services and updating configurations. The communication channels to the agents are provided by a \textit{management network}. The reason for using a separate network to carry management traffic is to avoid interference and simplify control of the digital twin \cite{clemm2007network}.

We provide an example of using the management system to execute control actions inside a digital twin in Listing \ref{lst:control}. To complement the Python APIs, the management system also includes a web interface and a command-line interface, both of which provide the same functions as the Python API. A video demonstration of the management system is available at \cite{csle_docs} and screenshots of the web interface are provided in Appendix~\ref{app:frontend} in the supplementary material.

\begin{lstlisting}[caption={Python code for executing a control action.}, label={lst:control}]
from csle_common.metastore.metastore_facade import MetastoreFacade  
from csle_common.util.emulation_util import EmulationUtil
twin=MetastoreFacade.get_twin(name="")
EmulationUtil.execute_ssh(cmds=[cmd], ip="", twin=twin)
\end{lstlisting}

To monitor processes and services running inside the digital twin, we use a monitoring system based on a publish-subscribe architecture; see Fig. \ref{fig:monitoring}. Following this architecture, each emulated device in a digital twin runs a \textit{monitoring agent}, which reads local metrics of the host and pushes those metrics to an event bus implemented with Kafka \cite{kafka}. The data in this bus is consumed by data pipelines, which process the data and write it to storage systems. In particular, the data is exported to an Elasticsearch database that can be queried and visualized through Kibana dashboards for real-time monitoring.

\begin{figure}[H]
  \centering
  \scalebox{0.8}{
    \input{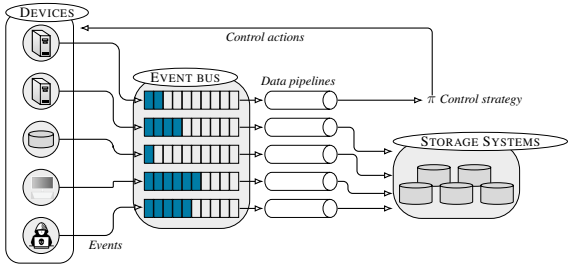}
    }
    \caption{Monitoring system of a digital twin in CSLE. Emulated devices run monitoring agents that periodically push metrics to an event bus, which is consumed by pipelines that process the data and write to storage systems; the processed data is also used as input to automated control strategies to decide on control actions.}
    \label{fig:monitoring}
  \end{figure}
  \begin{table}[H]
  \centering
\scalebox{0.7}{
  \begin{tabular}{lll} \toprule
\rowcolor{lightgray}    
  {\textit{Action}} & {\textsc{mitre d3fend} technique}\\ \midrule
  Revoke user certificates & D3-CBAN certificate revocation. \\
  Blacklist IPs & D3-NTF network traffic filtering. \\
  Drop traffic & D3-NTF network traffic filtering. \\
  Block gateway & D3-NI network isolation. \\
  Migrate servers between zones & D3-NI network isolation. \\
  Redirect traffic & D3-NTF network traffic filtering. \\
  Isolate a server & D3-NI network isolation. \\
  Deploy new security functions & D3-NTPM network policy mapping. \\
  Shutdown a server & D3-HS host shutdown. \\
  Replicate a service & D3-SVCDM service mapping. \\
  Start decoy services & D3-D3 decoy environment. \\
  \bottomrule\\
\end{tabular}
}
\caption{Examples of defender actions in CSLE; the actions are linked to the corresponding defense techniques in the MITRE D3FEND taxonomy \cite{kaloroumakis2021toward}.}\label{tab:defender_stop_actions}
\end{table}

Figure \ref{fig:monitoring_overhead} shows performance statistics related to the monitoring system. In particular, Fig. \ref{fig:monitoring_overhead}.a shows that the CPU overhead introduced by the monitoring agents is around 6\%, while the memory overhead is approximately 1\%. Both values can be considered relatively low. Furthermore, Fig. \ref{fig:monitoring_overhead}.b shows that the number of monitoring events produced by the monitoring agents per monitoring interval increases with the size of the digital twin and also depends on the specific system configuration. Larger twins typically contain more monitored components, which naturally results in a higher number of generated events. In addition, the event rate is influenced by the types and number of monitoring mechanisms deployed. For example, configurations that include a larger number of intrusion detection systems (as is the case for the system in Fig.~\ref{fig:systems}.b) generate more monitoring events.

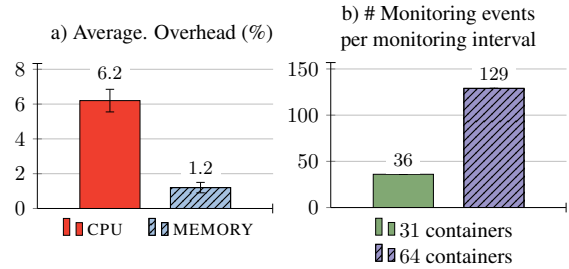
\begin{figure}
  \centering
  \scalebox{0.8}{
    \begin{tikzpicture}    

\node[scale=1] (kth_cr) at (0,0)
{
\begin{tikzpicture}
\begin{axis}[
   ybar,
    title style={align=center},
    ticks=both,
    ymin=0,
    axis x line = bottom,
    axis y line = left,
    axis line style={-|},
    enlarge y limits={lower, value=0.1},
    enlarge y limits={upper, value=0.22},
    xtick=\empty,
    ymajorgrids,
    xticklabels={},
    legend style={nodes={scale=1, transform shape}, at={(0.08, -0.15)}, align=left, anchor=west, legend columns=4, draw=none},
   x tick label style={align=center, yshift=-0.1cm},
    enlarge x limits=4,
    width=5.5cm,
    bar width=1cm,
    height=4cm,
    ]

\addplot+[
  draw=black, color=black,fill=Red!90,
  nodes near coords,
  every node near coord/.append style={
      anchor=south,
      shift={(axis direction cs:0,0.8)}, 
      font=\small\bfseries, fill=white,
scale=1,
  },
  error bars/.cd, y dir=both, y explicit,
] coordinates {
  (0,6.2) +- (0,0.65)
};

\addplot+[
  draw=black, color=black,fill=bluethree,
postaction={pattern=north east lines},
  nodes near coords,
  every node near coord/.append style={
      anchor=south,
      shift={(axis direction cs:0,0.38)}, 
      font=\small\bfseries, fill=white,
scale=1,
  },
  error bars/.cd, y dir=both, y explicit,
] coordinates {
  (1,1.2) +- (0,0.3)
};

\legend{\textsc{cpu}$\quad$, \textsc{memory}}
\end{axis}
\end{tikzpicture}
};

\node[scale=1] (kth_cr) at (4.7,-0.225)
{
\begin{tikzpicture}
\begin{axis}[
   ybar,
    title style={align=center},
    ticks=both,
    ymin=0,
    axis x line = bottom,
    axis y line = left,
    axis line style={-|},
    enlarge y limits={lower, value=0.1},
    enlarge y limits={upper, value=0.22},
    xtick=\empty,
    ymajorgrids,
    xticklabels={},
    legend style={nodes={scale=1, transform shape}, at={(0.16, -0.25)}, align=left, anchor=west, legend columns=1, draw=none},
   x tick label style={align=center, yshift=-0.1cm},
    enlarge x limits=4,
    width=5.5cm,
    bar width=1cm,
    height=4cm,
    ]

\addplot+[
  draw=black, color=black,fill=OliveGreen!60,
  nodes near coords,
  every node near coord/.append style={
      anchor=south,
      shift={(axis direction cs:0,0.8)}, 
      font=\small\bfseries, fill=white,
scale=1,
  },
  error bars/.cd, y dir=both, y explicit,
] coordinates {
  (0,36) +- (0,0.0)
};

\addplot+[
  draw=black, color=black,fill=Blue!40,
postaction={pattern=north east lines},
  nodes near coords,
  every node near coord/.append style={
      anchor=south,
      shift={(axis direction cs:0,1.35)}, 
      font=\small\bfseries, fill=white,
scale=1,
  },
  error bars/.cd, y dir=both, y explicit,
] coordinates {
  (1,129) +- (0,0.0)
};

\legend{$31$ containers, $64$ containers}
\end{axis}
\end{tikzpicture}
};

\node[inner sep=0pt,align=center, scale=1, rotate=0, opacity=1] (obs) at (0.4,1.95)
{
a) Average. Overhead (\%)
};

\node[inner sep=0pt,align=center, scale=1, rotate=0, opacity=1] (obs) at (4.95,2.05)
{
  b) \# Monitoring events\\ per monitoring interval
};


\end{tikzpicture}
 }
 \caption{Statistics of the monitoring system in CSLE. Plot a) shows the average overhead of a monitoring agent and plot b) shows the number of monitoring events per monitoring interval for two digital twins deployed with CSLE. The network topologies of the digital twins with $31$ and $64$ containers are shown in Fig.~\ref{fig:systems}.a and Fig.~\ref{fig:systems}.b, respectively. Numbers and error bars indicate the mean and the standard deviation from $5$ evaluations.}
\label{fig:monitoring_overhead}
\end{figure}

\begin{figure*}
  \centering
  \scalebox{0.81}{
   \input{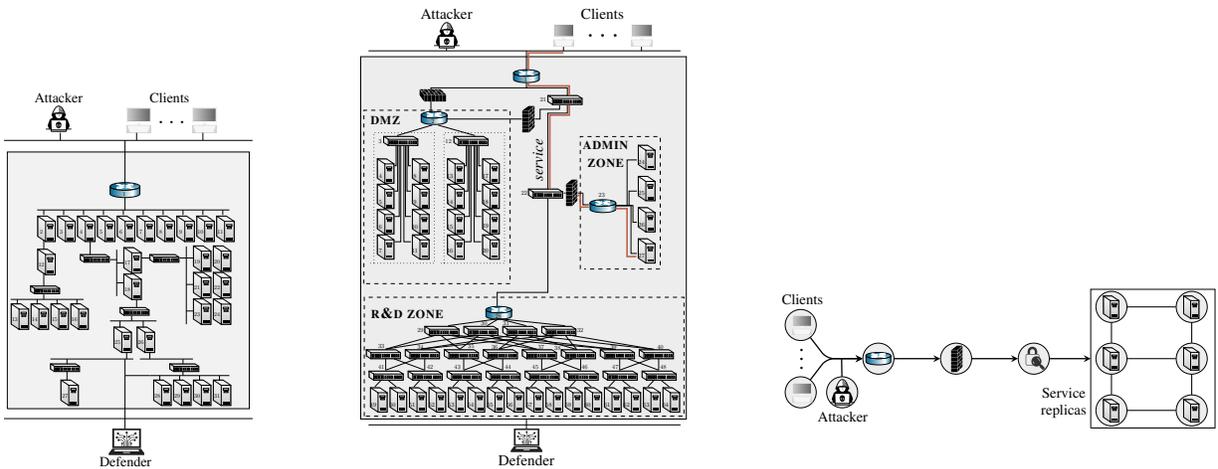}    
  }
  \caption{Target systems for the use cases in the experimental evaluation. The system configurations are available in Appendix~\ref{app:target_systems}.}
  \label{fig:systems}
\end{figure*}

\subsection{The Simulation System}\label{sec:simulation}
The simulation system in CSLE is implemented in Python and consists of reinforcement learning environments and algorithms for learning security strategies. All environments follow the OpenAI Gym interface \cite{towers2024gymnasiumstandardinterfacereinforcement}, which allows integration with standard reinforcement learning frameworks. Each environment defines a Markov decision process or a game and can be configured through Python configuration files. CSLE includes an initial set of 34 reinforcement learning algorithms, over 50 simulation environments, and 4 identification algorithms. We provide an example of using the simulation system to run a reinforcement learning algorithm in Listing \ref{lst:rl_sim}.

\begin{lstlisting}[caption={Python code for running the SARSA reinforcement learning algorithm in the simulation system.}, label={lst:rl_sim}]
from csle_agents.agents.sarsa.sarsa_agent import SARSAAgent
from csle_common.metastore.metastore_facade import MetastoreFacade
from csle_common.dao.training.experiment_config import ExperimentConfig
simulation = MetastoreFacade.get_simulation(..)
experiment = ExperimentConfig(..)
agent = SARSAAgent(simulation, experiment)
execution = agent.train()
MetastoreFacade.save_experiment_execution(execution)
for strategy in execution.result.strategies.values():
    MetastoreFacade.save_strategy(strategy)
\end{lstlisting}

\section{Example Use Cases}
We demonstrate CSLE by applying it to four different security use cases. Each use case involves a target system and a system operator, which we refer to as the \textit{defender}; see Fig. \ref{fig:systems}. (The detailed system configurations are available in Appendix~\ref{app:target_systems}.) The use cases are described below.

\subsection{Flow Control}
This use case involves an IT system that provides services to clients through a public gateway; see Fig.~\ref{fig:systems}.a. While the gateway enables legitimate access for clients, it also exposes an entry point for potential attackers attempting to intrude on the system and compromise components. To protect the system against such intrusions, the defender continuously monitors network traffic and analyzes security alerts to identify suspicious or malicious activity. Based on these observations, the defender can control network flows to mitigate potential network intrusions. For example, the defender can block suspicious flows or redirect them to a honeypot. When making these decisions, the defender aims to balance the dual objectives of preserving service availability for clients and mitigating potential attacks.

\subsection{Network Segmentation Control}
This use case involves a cloud infrastructure that is segmented into zones with virtual nodes that run network services; see Fig. \ref{fig:systems}.b. Services are realized by \textit{workflows} that clients access through a cloud gateway, which is also open to an attacker. The attacker aims to intrude on the infrastructure, compromise nodes, and disrupt workflows. To counter these threats, the defender continuously monitors the infrastructure by accessing and analyzing intrusion detection alerts and other statistics. Based on this information, the defender can respond to possible intrusions by changing the network segmentation. For example, the defender can migrate nodes between zones, change access controls, or shut down nodes. When deciding between these actions, the defender balances two conflicting objectives: maximizing workflow utility towards clients and minimizing the operational cost of possible intrusions and defensive actions.

\subsection{Recovery Control}
This use case involves a replicated system that provides a web service to a client population; see Fig. \ref{fig:systems}.c. Because multiple replicas deliver the same service, the system can continue operating even when some replicas are compromised. To track the evolving security status of the system, the defender analyzes security alerts that indicate potential compromises of service replicas. Based on these observations, the defender decides when and which replicas to recover in order to maintain service availability. When making these recovery decisions, the goal is to ensure that compromised service replicas are recovered faster than new compromises occur while minimizing the recovery costs.

\subsection{Replication Control}
In this use case, we study the problem of learning adaptive replication strategies for the system illustrated in Fig.~\ref{fig:systems}.c. The goal is to enable the system to autonomously adjust the number of replicas in response to changing security and performance conditions. In particular, the defender observes indicators of replica failures or degradations (e.g., security alerts) and uses these signals to dynamically decide when to launch new replicas or retire existing ones. By adapting the replication factor, the system can maintain service availability even under fluctuating attack intensities or workload demands. The key challenge is to minimize the number of replicas (to reduce operational costs) while still satisfying service availability and reliability constraints.
\begin{figure*}
  \centering
  \scalebox{0.76}{
   \input{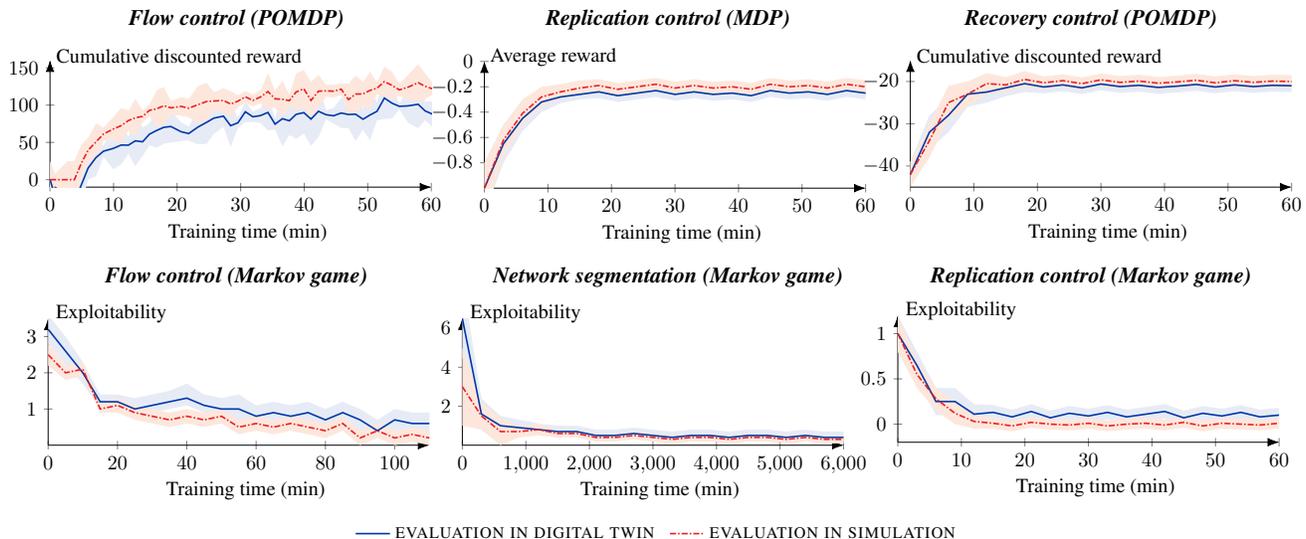}    
  }
  \caption{Convergence curves for the different use cases. The red curves relate to the performance in the simulations and the blue curves relate to the performance when evaluating the learned strategies in the digital twins. Curves show the mean values from evaluations with $5$ random seeds; shaded areas indicate standard deviations. The x-axes indicate the training times in the simulations. The top row relates to simulations of decision-theoretic models. The bottom row relates to simulations of game-theoretic models.}
  \label{fig:eval}
\end{figure*}

\section{System Models}
For each of the use cases described above, we consider two different reinforcement learning problems. First, we consider the problem of learning an optimal control strategy against an attacker that follows a fixed strategy. We model this problem as a Markov decision process (MDP) or a partially observed MDP (POMDP), depending on whether the system state is observable. Second, we consider the problem of learning a control strategy that is effective against a \textit{dynamic} attacker that adapts its strategy to circumvent the defenses. We model this problem as a Markov game. In total, we consider six reinforcement learning problems, which are listed in Table~\ref{tab:algos}. We address each problem using the general reinforcement learning methodology described in \S \ref{sec:end_to_end}. Mathematical formulations are given in Appendix \ref{app:models}.

\begin{table}[H]
  \centering
  \scalebox{0.76}{
    \begin{tabular}{ll} \toprule
\rowcolor{lightgray}      
  {\textit{Model}} & {\textit{Algorithm}} \\ \midrule
  Flow control POMDP & SPSA \cite{spsa} \\
  Replication control MDP & PPO \cite{ppo} \\
  Recovery control POMDP & Rollout \cite{bertsekas2021rollout} \\
  Flow control game & Fictitious play with SPSA \cite{brown_fictious_play} \\
  Segmentation game & Fictitious play with PPO \cite{brown_fictious_play} \\
  Replication control game & PPO \cite{ppo} \\  
  \bottomrule\\
\end{tabular}
}
\caption{The reinforcement learning problems that we consider in the experimental evaluation and the algorithms that we use to address them. In the decision-theoretic problems (i.e., the MDP and POMDPs), the goal is to learn an optimal security strategy against a fixed attacker strategy, whereas in the game-theoretic problems, the goal is to learn an equilibrium strategy against an attacker that dynamically adapts its strategy to the defender's strategy.}\label{tab:algos}
\end{table}
\section{Experimental Evaluation of CSLE}\label{sec:experimental_eval}
In this section, we present our experimental results. Following the methodology described in \S \ref{sec:end_to_end}, we identify the parameters of each system model described in the preceding section (e.g., the MDP, POMDP, or game parameters) based on data collected from the digital twin. Then, given the identified model, we learn the security strategy through simulation, after which we evaluate the learned strategy in the digital twin. The reinforcement learning algorithms that we use are listed in Table~\ref{tab:algos}. For details about the system identification and the collected data, see Appendix \ref{app:models}.

\subsection{Experimental Setup}
We run the reinforcement learning algorithms until convergence and evaluate the learned strategies periodically during training, both in the simulation and in the digital twin. The hyperparameters that we use to instantiate the algorithms are listed in Appendix~\ref{app:hyperparameters}. The environment for training strategies and running simulations is a Tesla P100 GPU. The digital twins are deployed on a server with a 24-core Intel Xeon Gold 2.10 GHz CPU and 768 GB RAM.

We consider two evaluation metrics: the reward and the exploitability. The reward measures the overall performance of the learned strategy in terms of its ability to achieve the defender’s objective in each use case. For the definitions of the reward functions, see Appendix \ref{app:models}. The exploitability, on the other hand, measures the distance of the learned strategies from a Nash equilibrium, where an exploitability of $0$ means that the strategies are in equilibrium.
\begin{figure*}
  \centering
  \scalebox{0.76}{
   \input{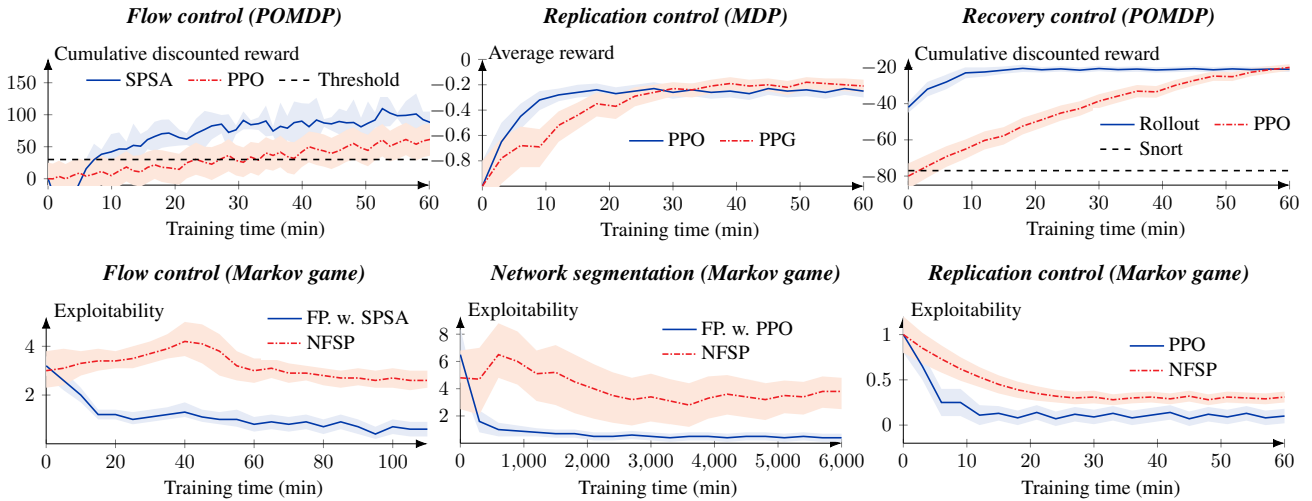}    
  }
  \caption{Performance comparison between reinforcement learning methods and baseline strategies in the digital twin. Curves show the mean values from evaluations with $5$ random seeds; shaded areas indicate standard deviations. The top row relates to the decision-theoretic models. The bottom row relates to the game-theoretic models. The acronym FP stands for fictitious play.}
  \label{fig:eval_2}
\end{figure*}

\subsection*{Baselines}
We compare the reinforcement learning algorithms listed in Table~\ref{tab:algos} against two baseline methods, namely phasic policy gradient (PPG) \cite{ppg} and neural fictitious self-play (NFSP) \cite{heinrich2016deepreinforcementlearningselfplay}. Additionally, we compare the performance of the reinforcement learning methods with that of two static security strategies: a threshold flow control strategy and a recovery strategy based on the Snort intrusion detection and prevention system with community ruleset v2.9.17.1 \cite{snort}.

The threshold strategy blocks network flows when the defender's belief (probability) that the system is compromised exceeds a predefined threshold $\alpha = 0.75$. The belief is computed according to the POMDP model, which maintains a probabilistic estimate of the underlying system state based on the sequence of observed system events. When the belief that the system is in a compromised state surpasses the threshold, the strategy proactively blocks network flows in order to limit potential attacker movement and prevent further propagation of the compromise.

The Snort baseline follows a rule-based recovery strategy that represents a signature-based response mechanism. This strategy recovers a component of the target system (e.g., by redeploying it in a new virtual machine) when a Snort alert with priority medium or higher is generated.
\subsection{Evaluation Results}
The evaluation results are summarized in Figs.~\ref{fig:eval}--\ref{fig:eval_2}. The red and blue curves in Fig.~\ref{fig:eval} represent the results from the simulator and the digital twin, respectively. An analysis of these curves leads us to the following conclusions. The learning curves converge to nearly constant mean values for all use cases and evaluation metrics. From this observation, we conclude that the learned strategies have also converged.

Although the learned strategies, as expected, perform slightly better on the simulator than on the digital twin, we are encouraged by the fact that the curves of the digital twin are close to those of the simulator (cf. the blue and red curves). This small performance gap reflects inevitable discrepancies between the simulation model and the digital twin, such as differences in network latency, background processes, or unmodeled system dynamics.

Figure~\ref{fig:eval_2} shows a comparison between different reinforcement learning methods and baseline security strategies. We observe that the performance varies substantially. Overall, the results show that the reinforcement learning strategies significantly outperform the static strategies.
\subsection{Discussion}
The experimental results demonstrate that CSLE effectively enables the transfer of reinforcement learning-based security strategies from simulation to a digital twin that closely approximates an operational environment. The small performance gap observed between the simulator and the digital twin indicates that the identified simulation models capture the main dynamics of the target system. This transferability is a key step toward operational deployment, as it validates that security strategies learned in simulation remain effective when tested under realistic operating conditions.

For safety and operational reasons, we have not evaluated the learned strategies in a production environment. While the digital twin replicates the software, configuration, and timing behavior of a production environment, further study is needed to determine whether the learned strategies maintain comparable performance in a production environment.
\subsection{Sensitivity to Model Misspecification}
The effective transfer of the learned strategies from simulation to the digital twin indicates that the identified simulation dynamics capture the main characteristics of the target system. However, in practice, model misspecification may still arise due to factors such as measurement noise or changes in system behavior over time (e.g., data drift). To better understand the impact of such modeling inaccuracies, in this section, we analyze the sensitivity of the learned strategies to the misspecification of the system model.

From a theoretical perspective, the performance loss of learned strategies due to deviations between the model and the system dynamics is upper-bounded as follows.
\begin{proposition}[Model misspecification error bound]\label{prop:misspecification_bound}
Let $\tilde{f}$ denote the dynamics of the system model [cf.~\eqref{eq:dt_system}] and let $f$ denote the dynamics of the target system. Denote by $J_{\bm{\pi}}$ and $\tilde{J}_{\bm{\pi}}$ the value functions (i.e., the expected rewards) under a strategy pair $\bm{\pi}= (\pi_{\mathrm{D}},\pi_{\mathrm{A}})$ in the target system and in the simulation, respectively. If the dynamics satisfy
\begin{align*}
\sum_{s^{\prime} \in \mathcal{S}} \left|f(s^{\prime} \mid s, a^{(\mathrm{D})}, a^{(\mathrm{A})}) - \tilde{f}(s^{\prime} \mid s, a^{(\mathrm{D})}, a^{(\mathrm{A})}) \right| \leq \alpha,
\end{align*}
for all states and actions, and some constant $\alpha$. Then
\begin{align*}
\norm{J_{\bm{\pi}} - \tilde{J}_{\bm{\pi}}}_{\infty} \leq \frac{\alpha \gamma \beta}{(1-\gamma)^2},
\end{align*}
where
$\gamma <1 $ is the discount factor and $\beta$ is defined by
\begin{align*}
\beta = \max_{s \in \mathcal{S}, a^{(\mathrm{D})} \in \mathcal{A}_{\mathrm{D}},a^{(\mathrm{A})} \in \mathcal{A}_{\mathrm{A}} }|r(s,a^{(\mathrm{D})}, a^{(\mathrm{A})})|,
\end{align*}
where $r$ is the reward function, $\mathcal{S}$ is the state space, and $(\mathcal{A}_{\mathrm{D}},\mathcal{A}_{\mathrm{A}})$ are the action spaces.
\end{proposition}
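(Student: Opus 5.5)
Fix the strategy pair $\bm{\pi}$ and reduce everything to two Markov chains on $\mathcal{S}$ that share one reward function and differ only in their transition kernels. Let $r_{\bm{\pi}}(s)$ be the expected one-step reward under $\bm{\pi}$, so $|r_{\bm{\pi}}(s)|\leq \beta$ by the definition of $\beta$. Let $P_{\bm{\pi}}(s'\mid s)=\mathbb{E}_{a^{(\mathrm{D})}\sim\pi_{\mathrm{D}},a^{(\mathrm{A})}\sim\pi_{\mathrm{A}}}[f(s'\mid s,a^{(\mathrm{D})},a^{(\mathrm{A})})]$, and define $\tilde{P}_{\bm{\pi}}$ in the same way from $\tilde{f}$. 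The value functions then satisfy the policy-evaluation Bellman equations
\begin{align*}
J_{\bm{\pi}} = r_{\bm{\pi}} + \gamma P_{\bm{\pi}} J_{\bm{\pi}}, \qquad \tilde{J}_{\bm{\pi}} = r_{\bm{\pi}} + \gamma \tilde{P}_{\bm{\pi}} \tilde{J}_{\bm{\pi}}.
\end{align*}
If the strategies depend on observation histories rather than only on the state, I will either restrict to stationary Markov strategies or lift the state to the belief or history. The same argument goes through as long as the per-step total-variation bound and the reward bound still hold on the lifted space.

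The first ingredient is the transition bound under the strategy. For every $s$, the triangle inequality and averaging over the actions give $\sum_{s'}|P_{\bm{\pi}}(s'\mid s)-\tilde{P}_{\bm{\pi}}(s'\mid s)|\leq\alpha$. This is the step that turns the hypothesis, which is uniform over action pairs, into a bound on the rows of the induced chain.

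The second ingredient is a uniform bound on the value. Since $|r_{\bm{\pi}}|\leq\beta$, the geometric series gives $\norm{\tilde{J}_{\bm{\pi}}}_\infty\leq \beta/(1-\gamma)$.

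The main step is a telescoping identity. Subtracting the two Bellman equations gives
\begin{align*}
J_{\bm{\pi}}-\tilde{J}_{\bm{\pi}} = \gamma P_{\bm{\pi}}(J_{\bm{\pi}}-\tilde{J}_{\bm{\pi}}) + \gamma (P_{\bm{\pi}}-\tilde{P}_{\bm{\pi}})\tilde{J}_{\bm{\pi}}.
\end{align*}
Take sup-norms. $P_{\bm{\pi}}$ is a stochastic matrix and so is non-expansive in $\norm{\cdot}_\infty$. Each row of $P_{\bm{\pi}}-\tilde{P}_{\bm{\pi}}$ has $\ell_1$-norm at most $\alpha$. Hence
\begin{align*}
\norm{J_{\bm{\pi}}-\tilde{J}_{\bm{\pi}}}_\infty \leq \gamma\norm{J_{\bm{\pi}}-\tilde{J}_{\bm{\pi}}}_\infty + \gamma\alpha\,\frac{\beta}{1-\gamma},
\end{align*}
and rearranging yields $\norm{J_{\bm{\pi}}-\tilde{J}_{\bm{\pi}}}_\infty\leq \alpha\gamma\beta/(1-\gamma)^2$, which is the claimed bound.

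The only step that needs care is the bound on the cross term $|((P_{\bm{\pi}}-\tilde{P}_{\bm{\pi}})\tilde{J}_{\bm{\pi}})(s)|$. The crude estimate $\ell_1\times\ell_\infty$ gives exactly $\alpha\beta/(1-\gamma)$, which is what the stated constant requires; the sharper estimate that uses the zero-sum structure of the difference, improving the bound by a factor of $1/2$, is not needed. I also need the state space to be countable so the sums make sense; for a general state space I would replace them by integrals, and the argument is unchanged.
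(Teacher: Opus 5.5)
Your proof is correct and is essentially the paper's argument (simulation-lemma add-and-subtract on the two Bellman equations, an $\ell_1\times\ell_\infty$ bound on the cross term, then rearranging). The only differences are cosmetic: the paper contracts with $\tilde{f}$ and bounds $\norm{J_{\bm{\pi}}}_\infty$, whereas you contract with $P_{\bm{\pi}}$ and bound $\norm{\tilde{J}_{\bm{\pi}}}_\infty$; and the paper treats $\bm{\pi}(s)$ as a deterministic Markov map, whereas you also average over randomized actions.

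One correction to your closing aside, which you rightly do not use: the zero-sum (span) refinement gives no factor-$1/2$ gain here. Since $\beta$ bounds $|r|$ and rewards may have either sign, $\tilde{J}_{\bm{\pi}}$ can have span $2\beta/(1-\gamma)$. In fact the stated constant is asymptotically tight as $\alpha\to 0$: take two absorbing states with rewards $\pm\beta$ and let the model leak mass $\alpha/2$ from the good state to the bad one.
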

We present the proof of Prop.~\ref{prop:misspecification_bound} in the supplementary material (Appendix~\ref{app:misspecification_bound}). This proposition states that the misspecification error grows proportionally with the error of the discrepancy between the state transitions in the system model and the digital twin, as quantified by the parameter $\alpha$. In practice, this parameter can be estimated by comparing the simulated state trajectories of the model and the trajectories observed in the digital twin.

While Prop.~\ref{prop:misspecification_bound} provides a worst-case bound on the impact of model misspecification, it is conservative and does not necessarily reflect the sensitivity of a specific system instance. To complement the theoretical analysis, we therefore conduct an empirical sensitivity analysis on the flow control model used in our experiments. In this model, the probability that the attacker successfully compromises the system is governed by a parameter $p \in [0,1]$; see Appendix~\ref{sec:flow_pomdp} in the supplementary material for details. In the digital twin, we configure this parameter as $p=0.01$. To introduce model misspecification, we vary $p$ in the simulation model and evaluate how the performance of the learned defender strategy changes as the discrepancy between the simulation model and the digital twin increases. The results of this sensitivity analysis are summarized in Fig.~\ref{fig:sensitivity}.
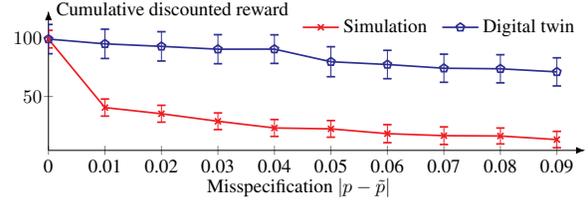
\begin{figure}[H]
  \centering
  \scalebox{0.8}{
    \begin{tikzpicture}
  \pgfplotstableread{
0.0 98.99 98.99 0.0 12.5 7.5
0.01 40.55 94.95 61.39 12.5 7.3
0.02 35.27 92.79 64.52 12.5 7.2
0.03 28.85 90.37 76.52 12.4 7.1
0.04 23.13 90.44 85.31 12.2 7.2
0.05 22.35 79.68 49.33 12.8 7.1
0.06 18.27 77.30 59.03 12.1 7.7
0.07 16.55 74.13 63.58 12.0 7.5
0.08 16.29 73.66 61.37 12.0 6.8
0.09 13.23 71.02 57.79 12.0 7.0
}\resourceone

\pgfplotsset{/dummy/workaround/.style={/pgfplots/axis on top}}

\node[scale=1] (kth_cr) at (0,0)
{
\begin{tikzpicture}
  \begin{axis}
[
        xmin=0,
        xmax=0.095,
        ymax=123,
        width=10.5cm,
        height=3.9cm,
        scaled x ticks=false,
        ymin=4,
        axis y line=center,
        axis x line=bottom,
        scaled y ticks=false,
        xticklabels={0, 0, 0.01, 0.02, 0.03, 0.04, 0.05, 0.06, 0.07, 0.08, 0.09},
        yticklabel style={
        /pgf/number format/fixed,
        /pgf/number format/precision=5
      },
        xlabel style={below right},
        ylabel style={above left},
        x tick label style={align=center, yshift=0.04cm, scale=0.9},
        y tick label style={align=center, xshift=0.08cm, scale=0.9},                
        axis line style={-{Latex[length=1.5mm]}},
        legend style={at={(1,1)}},
        legend columns=2,
        legend style={
          draw=none,
          nodes={scale=0.9, transform shape},
      anchor=north east,
      /tikz/every even column/.style={anchor=west, column sep=5pt},
      /tikz/every odd column/.style={anchor=west},          
            /tikz/column 2/.style={
                column sep=5pt,
              }
            },
            error bars/y dir=both,
error bars/y explicit,
error bars/error mark options={
  rotate=90, 
  mark size=2pt,
  thick
},
              ]
            \addplot[Red,name path=l1, thick, mark=x, mark repeat=1] table [x index=0, y index=1, y error index=5] {\resourceone};
            \addplot[Blue,name path=l1, thick, mark=pentagon, mark repeat=1] table [x index=0, y index=2, y error index=4] {\resourceone};

            \legend{Simulation, Digital twin}
            \end{axis}
\node[inner sep=0pt,align=center, scale=0.9, opacity=1] (obs) at (2.1,2.35)
{
Cumulative discounted reward
};
\node[inner sep=0pt,align=center, scale=0.9, rotate=0, opacity=1] (obs) at (4.2,-0.62)
{
  Misspecification $|p-\tilde{p}|$
};
\end{tikzpicture}
};

\end{tikzpicture}
 }
 \caption{Analysis of the sensitivity to model misspecification in the flow control use case. Numbers and error bars indicate the mean and the standard deviation from $5$ evaluations.}
\label{fig:sensitivity}
\end{figure}

Figure~\ref{fig:sensitivity} shows that small modeling errors lead to noticeable differences between the performance predicted by the simulation and the performance observed in the digital twin. However, these discrepancies in the simulated performance translate only to small variations in the performance of the learned defender strategy when it is evaluated in the digital twin. This indicates that, although the model is sensitive to misspecification when estimating performance, the learned strategy itself is relatively robust to such errors.
\section{Conclusion}
In this paper, we present CSLE, a comprehensive research platform for autonomous security management through reinforcement learning. This platform addresses the system-level challenges that arise in the operation and experimentation with reinforcement learning in networked systems. In particular, it is based on a novel methodology for learning security strategies that combines a digital twin with system identification and simulation-based reinforcement learning. Our evaluation across four security use cases demonstrates that this methodology narrows the gap between simulated and practical performance of learned security strategies.

\subsection*{Future Work}
Future work will focus on further developing the CSLE platform and expanding its open-source ecosystem. We plan to continue improving the platform’s usability by adding more learning resources, documentation, and example configurations to facilitate adoption by both researchers and practitioners. So far, our experimental validation of CSLE has focused primarily on IT systems. In future work, we aim to extend CSLE to cyberphysical systems.

\bibliography{references}
\bibliographystyle{mlsys2025}

\clearpage

\appendix

\section*{Supplementary Material}
This document provides supplementary material for the paper ``\textit{CSLE: A Reinforcement Learning Platform for Autonomous Security Management}'', published at the Machine Learning and Systems (MLSys) conference 2026. It contains the following appendices.

\begin{itemize}
\item \textbf{Appendix~\ref{app:frontend}}: \textit{Management System Frontend}.
\begin{itemize}
\item This appendix provides example screenshots of the management system in CSLE.
\end{itemize}
\item \textbf{Appendix~\ref{app:target_systems}}: \textit{Target System Configurations}.
\begin{itemize}
\item This appendix provides the detailed configurations of the target systems in Fig.~\ref{fig:systems}.
\end{itemize}
\item \textbf{Appendix~\ref{app:hyperparameters}}: \textit{Hyperparameters}.
\begin{itemize}
\item This appendix provides the hyperparameters that we use to instantiate the reinforcement learning algorithms in the experimental evaluation.
\end{itemize}
\item \textbf{Appendix~\ref{app:models}}: \textit{Simulation Models}.
\begin{itemize}
\item This appendix provides details about the simulation models used in the experimental evaluation.
\end{itemize}
\item \textbf{Appendix~\ref{app:misspecification_bound}}: \textit{Proof of Prop.~\ref{prop:misspecification_bound}}.
\begin{itemize}
\item This appendix provides the proof of Prop.~\ref{prop:misspecification_bound}.
\end{itemize}
\item \textbf{Appendix~\ref{app:observation_example}}: \textit{Example: Observation Mapping}.
\begin{itemize}
\item This appendix provides an illustrative example of how we map log events (e.g., security alerts) to observations that are input to the reinforcement learning strategies.
\end{itemize}
\item \textbf{Appendix~\ref{app:digital_twin_config}}: \textit{Example: Digital twin configuration}.
\begin{itemize}
\item This appendix provides an example configuration of a digital twin in CSLE.
\end{itemize}
\item \textbf{Appendix~\ref{app:artifact_appendix}}: \textit{Artifacts appendix}.
\begin{itemize}
\item This appendix provides details about the artifacts associated with this paper.
\end{itemize}
\end{itemize}

\begin{table*}
\centering
\scalebox{0.8}{
  \begin{tabular}{llll}
\rowcolor{lightgray}    
  {\textit{ID (s)}} & {\textit{Operating system}} & {\textit{Services}} & {\textit{Vulnerabilities}} \\ \midrule
  $1$ & Ubuntu 20 & Snort (community ruleset v2.9.17.1), SSH & - \\
  $2$ & Ubuntu 20 & SSH, HTTP Erl-Pengine, DNS & Weak password\\
  $4$ & Ubuntu 20 & HTTP, Telnet, SSH & Weak password \\
  $10$ & Ubuntu 20 & FTP, MongoDB, SMTP, Tomcat, Teamspeak 3, SSH & Weak password \\
  $12$ & Debian Jessie & Teamspeak 3, Tomcat, SSH & CVE-2010-0426, Weak password \\
  $17$ & Debian Wheezy & Apache 2, SNMP, SSH & CVE-2014-6271 \\
  $18$ & Debian 9.2 & IRC, Apache2, SSH & SQL injection \\
  $22$ & Debian Jessie & ProFTPd, SSH, Apache2, SNMP & CVE-2015-3306 \\
  $23$ & Debian Jessie & Apache2, SMTP, SSH & CVE-2016-10033 \\
  $24$ & Debian Jessie & SSH & CVE-2015-5602, Weak password \\
  $25$ & Debian Jessie & Elasticsearch, Apache 2, SSH, SNMP & CVE-2015-1427\\
  $27$ & Debian Jessie & Samba, NTP, SSH & CVE-2017-7494\\
  $3$,$11$,$5$-$9$& Ubuntu 20 & SSH, SNMP, PostgreSQL, NTP & -\\
  $13-16$,$19-21$,$26$,$28-31$& Ubuntu 20 & NTP, IRC, SNMP, SSH, PostgreSQL & -\\
\end{tabular}
}
\caption{Configuration of the target system in Fig.~\ref{fig:systems}.a.}\label{tab:config1}
\end{table*}

\begin{table*}
\centering
\scalebox{0.8}{
  \begin{tabular}{l|l|l|l|l|l}
\rowcolor{lightgray}    
  {\textit{ID(s)}} & {\textit{Type}} & {\textit{Operating system}} & {\textit{Zone}} & {\textit{Services}} & {\textit{Vulnerabilities}} \\ \midrule
  $1$ & Gateway & Ubuntu 20 & - & Snort (ruleset v2.9.17.1), SSH, OpenFlow v1.3, Ryu SDN controller & -\\
  $2$ & Gateway & Ubuntu 20 & DMZ & Snort (ruleset v2.9.17.1), SSH, OVS v2.16, OpenFlow v1.3 & -\\
  $28$ & Gateway & Ubuntu 20 & R\&D & Snort (ruleset v2.9.17.1), SSH, OVS v2.16, OpenFlow v1.3 & -\\
  $3$,$12$ & Switch & Ubuntu 22 & DMZ & SSH, OpenFlow v1.3 , OVS v2.16& -\\
  $21,22$ & Switch & Ubuntu 22 & - & SSH, OpenFlow v1.3, OVS v2.16 & -\\
  $23$ & Switch & Ubuntu 22 & Admin & SSH, OpenFlow v1.3, OVS v2.16 & -\\
  $29$-$48$ & Switch & Ubuntu 22 & R\&D & SSH, OpenFlow v1.3, OVS v2.16 & -\\
  $13$-$16$ & Honeypot & Ubuntu 20 & DMZ & SSH, SNMP, PostgreSQL, NTP & -\\
  $17$-$20$ & Honeypot & Ubuntu 20 & DMZ & SSH, IRC, SNMP, SSH, PostgreSQL & -\\
  $4$ & App node & Ubuntu 20 & DMZ & HTTP, DNS, SSH & CWE-1391\\
  $5$, $6$ & App node & Ubuntu 20 & DMZ & SSH, SNMP, PostgreSQL, NTP & -\\
  $7$ & App node & Ubuntu 20 & DMZ & HTTP, Telnet, SSH & CWE-1391\\
  $8$ & App node & Debian Jessie & DMZ & FTP, SSH, Apache 2, SNMP & CVE-2015-3306\\
  $9$,$10$ & App node & Ubuntu 20 & DMZ & NTP, IRC, SNMP, SSH, PostgreSQL & -\\
  $11$ & App node & Debian Jessie & DMZ & Apache 2, SMTP, SSH & CVE-2016-10033\\
  $24$ & Admin system & Ubuntu 20 & Admin & HTTP, DNS, SSH & CWE-1391\\
  $25$ & Admin system & Ubuntu 20 & Admin & FTP, MongoDB, SMTP, Tomcat, Teamspeak 3, SSH & -\\
  $26$ & Admin system & Ubuntu 20 & Admin & SSH, SNMP, Postgres, NTP & -\\
  $27$ & Admin system & Ubuntu 20 & Admin & FTP, MongoDB, SMTP, Tomcat, Teamspeak 3, SSH & CWE-1391\\
  $49$-$59$ & Compute node & Ubuntu 20 & R\&D & Spark, HDFS & -\\
  $60$ & Compute node & Debian Wheezy & R\&D & Spark, HDFS, Apache 2, SNMP, SSH & CVE-2014-6271\\
  $61$ & Compute node & Debian 9.2 & R\&D & IRC, Apache 2, SSH & CWE-89\\
  $62$ & Compute node & Debian Jessie & R\&D & Spark, HDFS, Teamspeak 3, Tomcat, SSH & CVE-2010-0426\\
  $63$ & Compute node & Debian Jessie & R\&D & SSH, Spark, HDFS & CVE-2015-5602\\
  $64$ & Compute node & Debian Jessie & R\&D & Samba, NTP, SSH, Spark, HDFS & CVE-2017-7494\\
\end{tabular}
}
\caption{Configuration of the target system in Fig.~\ref{fig:systems}.b.}\label{tab:config2}
\end{table*}

\begin{table*}
  \centering
\scalebox{0.85}{
  \begin{tabular}{llll}
\rowcolor{lightgray}    
  {\textit{ID(s)}} & {\textit{Operating system}} & {\textit{Background services}} & {\textit{Vulnerabilities}}  \\ \midrule
  $1$  & Debian 9.2 & Apache2 & CWE-89\\
  $2$  & Debian Jessie & FTP & CVE-2015-3306\\
  $3$  & Ubuntu 20 & SSH, Spark & CWE-1391 \\
  $4$  & Debian Jessie & Phpmailer & CVE-2016-10033\\
  $5$  & Debian Wheezy & Nginx & CVE-2014-6271\\
  $6$  & Debian Jessie & SSH, GRPC & CWE-1391, CVE-2010-0426\\
  $7$  & Debian Jessie & SSH, Spring boot & CVE-2015-5602, CWE-1391\\
  $8$  & Debian Jessie & PostgreSQL, Samba & CVE-2017-7494\\
\end{tabular}
}
\caption{Configuration of the target system in Fig.~\ref{fig:systems}.c. (The number of replicas is dynamically scalable. When starting a new replica, its configuration is selected randomly from the table.)}\label{tab:config3}
\end{table*}

\section{Management System Frontend}\label{app:frontend}
Three screenshots of the web interface to the management system in CSLE are shown in Figs.~\ref{fig:screen1}--\ref{fig:debugger}. A video demonstration is available at \cite{csle_docs}. The backend is implemented as a REST API in Python based on the Flask framework. The frontend is implemented in JavaScript based on the React framework.
\begin{figure*}
  \centering
  \scalebox{0.45}{
    \includegraphics{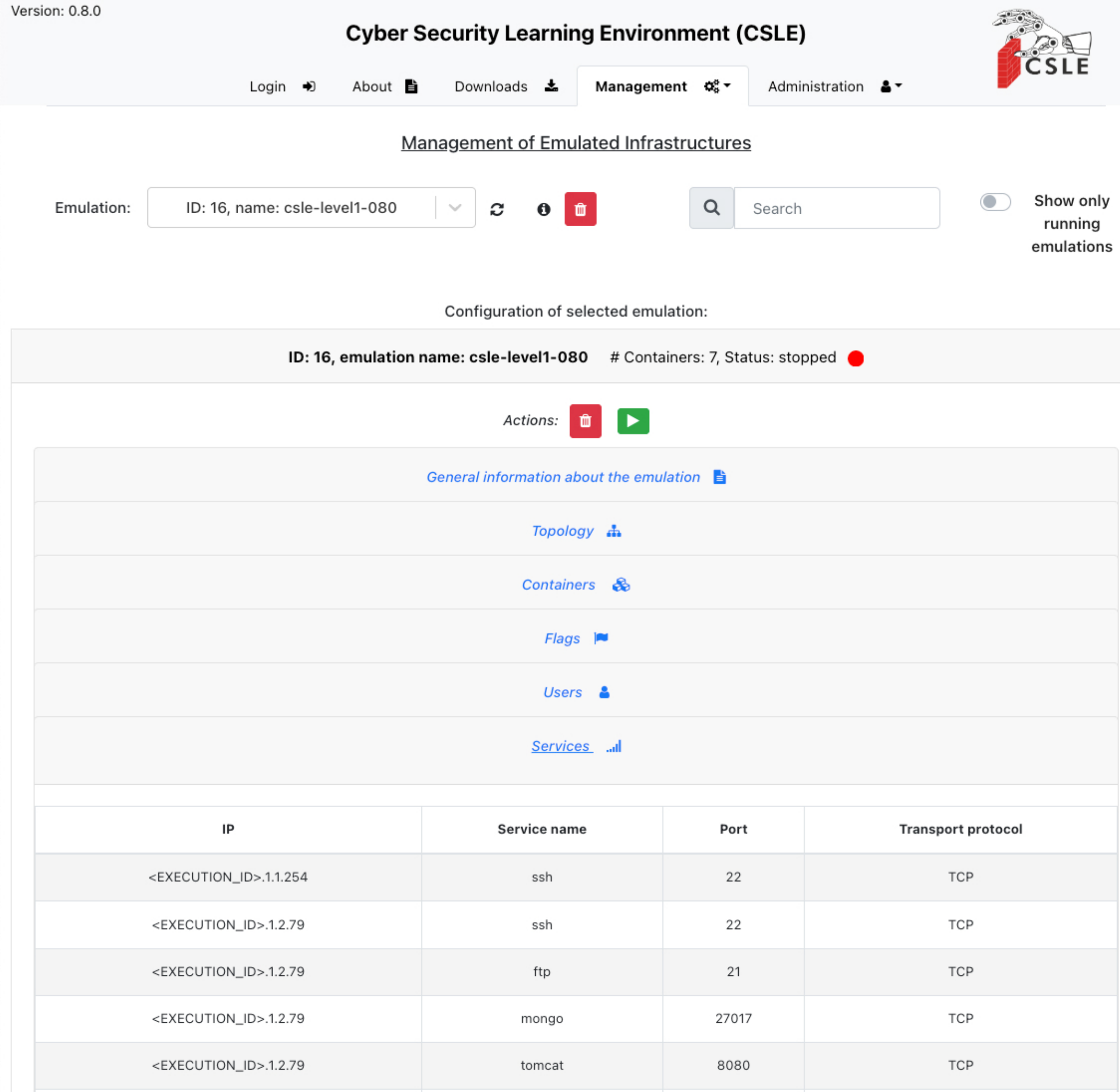}
  }
\caption{Screenshot of the web interface to the management system in CSLE. The figure shows the page for configuring emulations (i.e., digital twins). In addition to this page, the web interface allows viewing reinforcement learning experiments, debugging learned security strategies, real-time monitoring of digital twins, management of simulations, and integration with large language models.}
  \label{fig:screen1}
\end{figure*}

\begin{figure*}
  \centering
  \scalebox{0.21}{
    \includegraphics{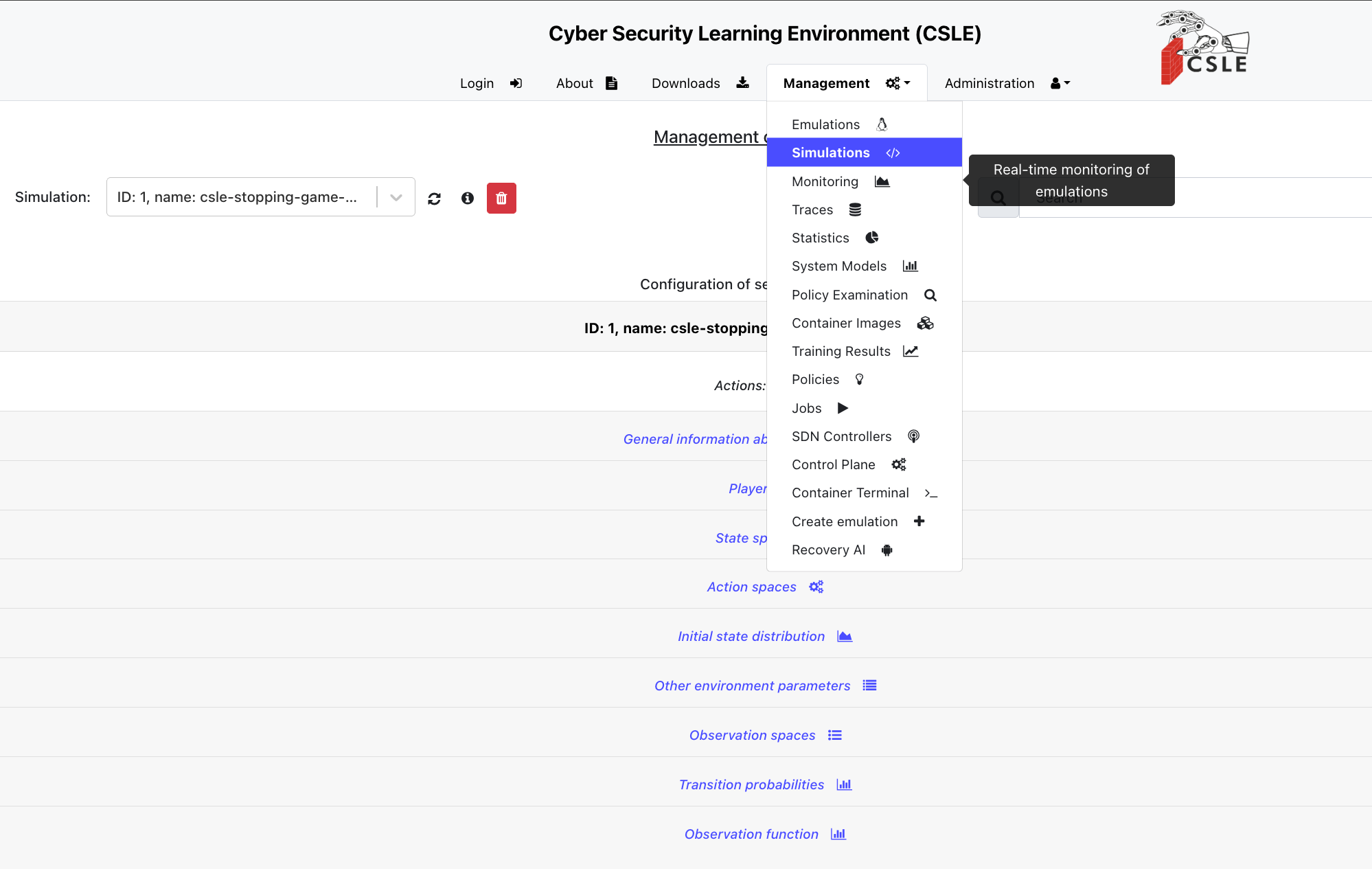}
  }
\caption{Screenshot of the web interface to the management system in CSLE. The figure shows the list of available pages.}
  \label{fig:screen3}
\end{figure*}

\begin{figure*}
  \centering
  \scalebox{0.38}{
    \includegraphics{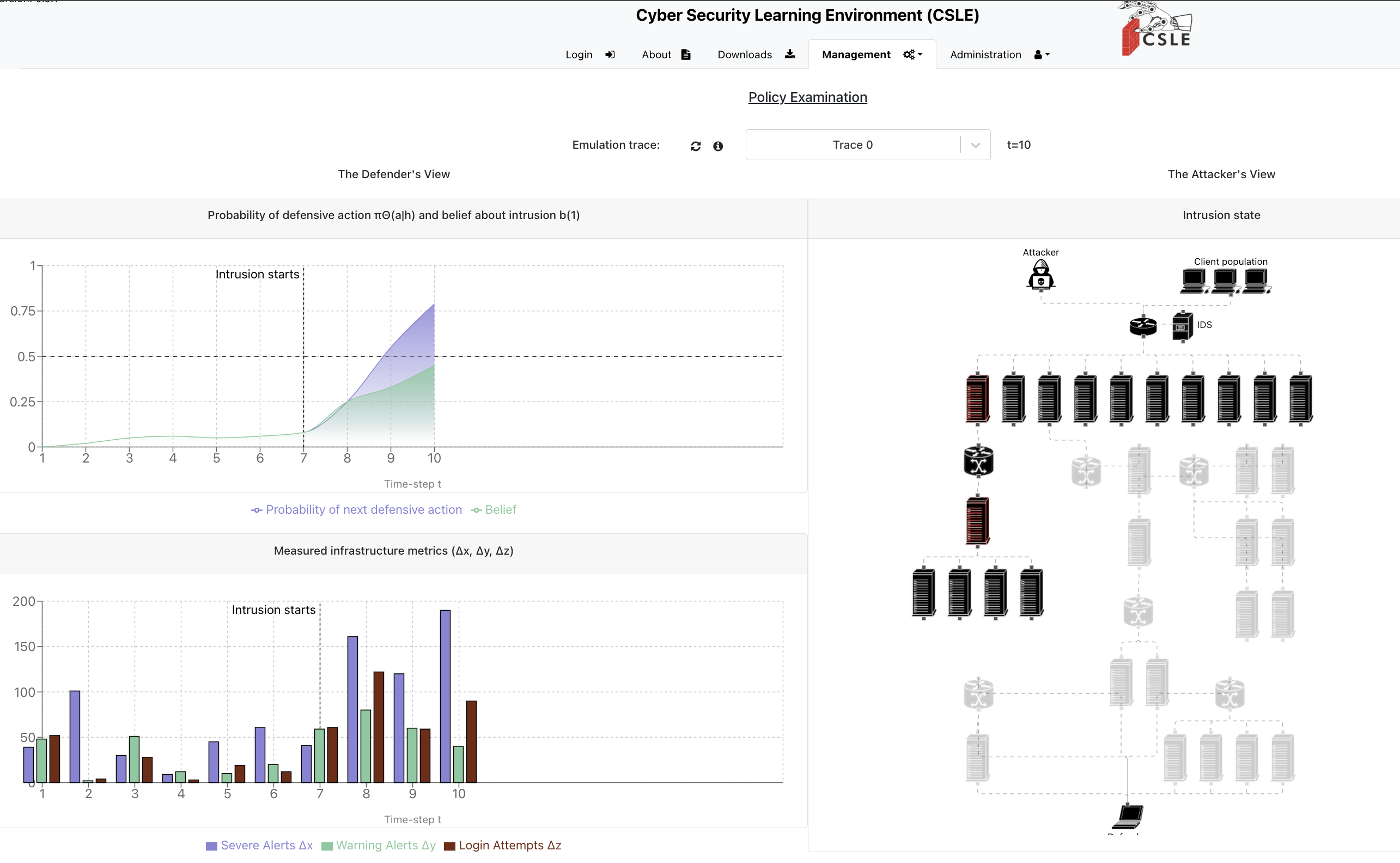}
  }
  \caption{Screenshot of the web interface of the strategy debugger in CSLE, which allows the user to interactively step through a POMDP episode. The left panel shows the defender's view, with infrastructure statistics updated in real time. The right panel shows the attacker's view, which consists of partial knowledge of the system under attack.}
  \label{fig:debugger}
\end{figure*}

\section{Target System Configurations}\label{app:target_systems}
The configuration of the target system in Fig.~\ref{fig:systems}.a is available in Table~\ref{tab:config1}. (Note that each component in Fig.~\ref{fig:systems} is labeled with an identifier $N_i$.) Similarly, the configuration of target systems in Fig.~\ref{fig:systems}.b and Fig.~\ref{fig:systems}.c are available in Table~\ref{tab:config2} and Table~\ref{tab:config3}, respectively.

\section{Hyperparameters}\label{app:hyperparameters}
The hyperparameters that we use to instantiate the reinforcement learning algorithms are listed in Table~\ref{tab:hyperparams}.

\begin{table}[H]
  \centering
  \scalebox{0.8}{
    \begin{tabular}{ll} \toprule
\rowcolor{lightgray}
      {\textit{Parameter(s)}} & {\textit{Value(s)}} \\ \midrule
      Discount factor $\gamma$ & $0.99$ \\\midrule
      
      \textbf{SPSA} \cite{spsa} &   \\\midrule
      $c,\epsilon,\lambda,A,a$ & $1,0.101, 0.602, 100, 1$\\
      \midrule\\
      \textbf{Rollout} \cite{bertsekas2021rollout} &   \\
      \midrule
      Rollout horizon & $20$\\
      Lookahead horizon & $1$\\
      Monte-Carlo samples & $20$\\       
    \midrule\\
    \textbf{PPO} \cite{ppo} &   \\
    \midrule      
    Learning rate, \# hidden layers  & $5.148 \cdot 10^{-5}$, $1$ \\
    \# Neurons/layer & $64$\\
    \# Steps between updates & $2048$\\      
    Batch size, discount factor $\gamma$ & $16$, $0.99$\\
    \textsc{gae} $\lambda$, clip range, entropy coefficient & $0.95$, $0.2$, $2\cdot 10^{-4}$\\
    Value coefficient, max gradient norm & $0.102$, $0.5$\\
    \bottomrule\\
  \end{tabular}}
  \caption{Hyperparameters used for the experimental evaluation.}\label{tab:hyperparams}
\end{table}
\section{Simulation Models}\label{app:models}
In this appendix, we detail the models used to instantiate the simulations for the use cases described in \S \ref{sec:experimental_eval}.
\subsection{Flow Control POMDP}\label{sec:flow_pomdp}
We formulate the flow control use case when the attacker follows a static strategy as a POMDP, which is defined by the following nine-tuple
\begin{align}
\langle \mathcal{S}, \mathcal{A}, f, r, \gamma, b_1, T, \mathcal{O}, z \rangle.\label{eq:background_pomdp_def}
\end{align}
The POMDP evolves in time steps from $t=1$ to $t=T$, which constitutes one \textit{episode}. The constant $\gamma \in [0,1]$ is a discount factor, $\mathcal{S}$ is the set of states, and $\mathcal{A}$ is the set of actions. The initial state is drawn from $b_1 \in \Delta(\mathcal{S})$ and $f(s_{t+1}\mid s_t, a_t)$ is the probability of transitioning from state $s_t$ to state $s_{t+1}$ when taking action $a_t$. The set of observations is denoted by $\mathcal{O}$ and $z(o_{t} \mid s_{t})$ is the \textit{observation function}, where $o_{t} \in \mathcal{O}$.

\begin{figure*}
  \centering`
    \scalebox{0.85}{
      \includegraphics{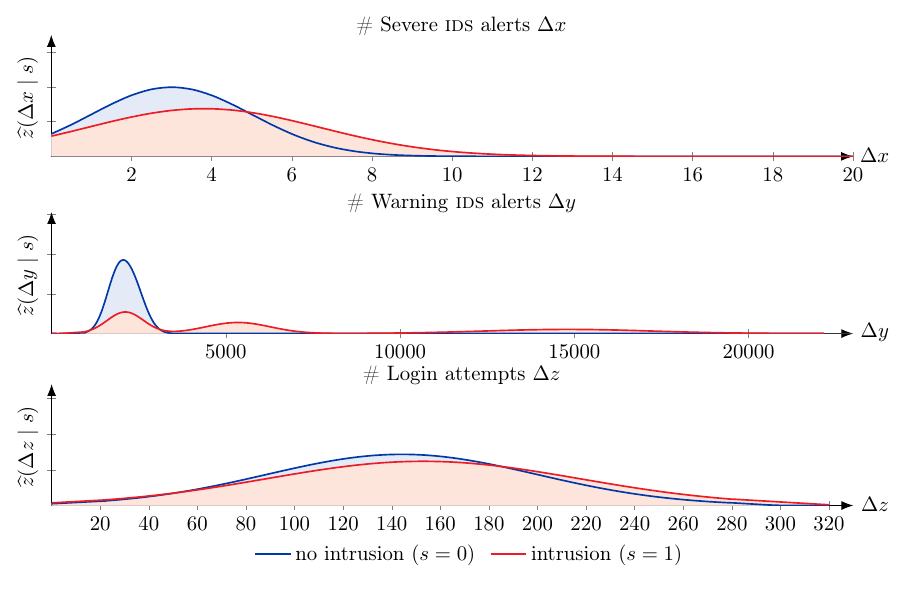}
    }
    \caption{Estimated (smoothed) distributions of severe IDS alerts $\Delta x$ (top row), warning IDS alerts $\Delta y$ (middle row), and login attempts $\Delta z$ (bottom row) for the flow control POMDP. The distributions are estimated based on measurements from the digital twin.}
    \label{fig:obs_dist_1}
  \end{figure*}
\paragraph{Actions \texorpdfstring{$\mathcal{A}$}{Lg}}\label{sec:p1_actions}
The defender has two actions: ($\mathsf{S}$)top and ($\mathsf{C}$)ontinue. The action space is thus $\mathcal{A} = \{\mathsf{S},\mathsf{C}\}$. We encode $\mathsf{S}$ with $1$ and $\mathsf{C}$ with $0$ to simplify the formal description below. Each stop is associated with a flow control action, and the objective is to decide the optimal times for stopping. The number of stops the defender must execute to prevent an intrusion is $L \geq 1$, which is a predefined parameter of our use case. The number of stop actions remaining is denoted by $l \in \{1,\hdots,L\}$.

\paragraph{States \texorpdfstring{$\mathcal{S}$}{Lg}}
The state $s_t$ is $0$ if no intrusion occurs and $1$ if an intrusion is ongoing. The terminal state $\emptyset$ is reached after the defender takes the final stop action. The state space is thus $\mathcal{S} = \{0,1,\emptyset\}$. The initial state is $s_1 = 0$. Hence, $b_1 \in \Delta(\mathcal{S})$ is the degenerate distribution $b_1(0) = 1$.

\paragraph{Observations \texorpdfstring{$\mathcal{O}$}{Lg}}
The defender has a partial view of the system and observes $o_t = (\Delta x_t, \Delta y_t, \Delta z_t)$, where $\Delta x_t$, $\Delta y_t$, and $\Delta z_t$ are bounded counters that denote the number of severe IDS alerts, warning IDS alerts, and login attempts generated during time step $t$, respectively. 

\paragraph{Transition function \texorpdfstring{$f_l(s^{\prime} \mid s, a)$}{Lg}}
We model the start of an intrusion by a Bernoulli process $(Q_t)_{t=1}^{T}$, where $Q_t \sim \mathrm{Ber}(p)$ is a Bernoulli random variable with $p > 0$. The first occurrence of $Q_t=1$ defines the intrusion start time $I$, which thus is geometrically distributed.

Consequently, we can define the transition function as
\begin{subequations}\label{eq:p1_transition_dynamics}
\begin{align}
&f_{1}(\emptyset \mid \cdot , 1) = f_{l}(\emptyset \mid \emptyset,\cdot)= 1, \label{eq:p1_tp_1}\\
&f_{l}(0 \mid 0, a) = 1-p, && \text{ if $l-a > 0$},\label{eq:p1_tp_2}\\
&f_{l}(1 \mid 0, a)= p, && \text{ if $l-a > 0$},\label{eq:p1_tp_3}\\
&f_{l}(1 \mid 1, a) = 1, && \text{ if $l-a > 0$},\label{eq:p1_tp_4}
\end{align}
\end{subequations}
where $a \in \mathcal{A}$.\footnote{Recall that we encode $(\mathsf{S}, \mathsf{C})= (1,0)$; hence $l_{t+1}=l_t-a_{t}$.}\footnote{We define $p=0.01$ for the evaluation reported in the paper.} All other state transitions occur with probability $0$. Equation (\ref{eq:p1_tp_1}) defines the transitions to the terminal state $\emptyset$, which is reached when the \textit{final} stop action is taken (i.e., when $l=1$ and $a = 1$). If (\ref{eq:p1_tp_1}) is not applicable, i.e., if the system does not reach the terminal state, then the transitions are defined by (\ref{eq:p1_tp_2})-(\ref{eq:p1_tp_4}). Equation (\ref{eq:p1_tp_2}) captures the case where no intrusion occurs; (\ref{eq:p1_tp_3}) specifies the case when the intrusion starts; and (\ref{eq:p1_tp_4}) describes the case where an intrusion is in progress. Note that the intrusion state $s=1$ is absorbing until $L$ stop actions have been taken.

\paragraph{Observation function \texorpdfstring{$z(o_{t} \mid s_{t})$}{Lg}}\label{sec:p1_obs_fun}
We estimate the distributions of IDS alerts and login attempts using data from the digital twin. At the end of every $30$s interval on the twin, we collect the metrics $\Delta x$, $\Delta y$, $\Delta z$, which contain the alerts and login attempts that occurred during the interval. We use $M=21,000$ i.i.d. samples to compute the empirical distribution $\widehat{z}(\cdot \mid s_t)$ as an estimate of $z(\cdot \mid s_t)$, where $\widehat{z} \overset{\text{a.s.}}{\rightarrow} z$ as $M \rightarrow \infty$ \cite{glivenko_cantelli}. Figure~\ref{fig:obs_dist_1} shows some of the estimated (smoothed) distributions. The distributions during normal operation and intrusion overlap. However, the distributions during intrusions tend to have more probability mass at larger values of $\Delta x, \Delta y$, and $\Delta z$.

\paragraph{Reward function \texorpdfstring{$r(s, a)$}{Lg}}\label{sec:p1_reward_fun}
The objective is to maintain service on the infrastructure while preventing a possible intrusion. Therefore, we define the reward function to give the maximum reward if the defender maintains service until the intrusion starts and then prevents the intrusion by taking $L$ stop actions. The reward per time step $r(s,a)$ is parameterized by the reward that the defender receives for stopping an intrusion ($R_{\mathrm{st}}> 0$), the reward for maintaining service ($R_{\mathrm{sla}}> 0$), and the loss of being intruded ($R_{\mathrm{int}} < 0$):
\begin{subequations}\label{eq:p1_reward_function}
\begin{align}
&r(\emptyset, \cdot) = 0, \label{eq:p1_reward_0}\\
&r(s, \mathsf{C}) = R_{\mathrm{sla}} + \frac{sR_{\mathrm{int}}}{L}, && s \in \{0,1\},\label{eq:p1_reward_3}\\
&r(s, \mathsf{S}) = \frac{sR_{\mathrm{st}}}{L}, && s \in \{0,1\}. \label{eq:p1_reward_5}
\end{align}
\end{subequations}
Equation (\ref{eq:p1_reward_0}) states that the reward in the terminal state is zero. Equation (\ref{eq:p1_reward_3}) states that the defender receives a positive reward ($\mathrm{R}_{\mathrm{sla}}$) for maintaining service and a loss ($\frac{R_{\mathrm{int}}}{L}$) for each time step that it is under intrusion. Lastly, (\ref{eq:p1_reward_5}) indicates that each stop incurs a cost by interrupting service (i.e., no $R_{\mathrm{sla}}$) and possibly a reward ($\frac{R_{\mathrm{st}}}{L}$) if it affects an ongoing intrusion.

\paragraph{Time horizon \texorpdfstring{$T$}{Lg}}\label{sec:p1_time_horizon}
The time horizon $T$ is a random variable that indicates the time $t > 1$ when the terminal state $\emptyset$ is reached. 

\paragraph{Theoretical analysis}
A detailed theoretical analysis of this POMDP can be found in \cite{hammar_stadler_tnsm}.
\subsection{Flow Control Markov Game}\label{sec:flow_game}
We formulate the flow control use case when the attacker follows a dynamic strategy as a partially observed zero-sum stochastic game (Markov game). The game follows similar dynamics as the POMDP defined above and has two players: the ($\mathrm{D}$)efender and the ($\mathrm{A}$)ttacker. In the following, we describe the components of the game, its evolution, and the players' objectives.

\paragraph{Time horizon $T$}
The time horizon $T > 1$ is a random variable representing the time when the attacker stops its intrusion or is prevented, depending on which event occurs first.

\paragraph{State space $\mathcal{S}$}
The game has three states: $s_t=0$ if no intrusion occurs, $s_t=1$ if an intrusion is ongoing, and $s_T=\emptyset$ if the game has ended. Hence, $\mathcal{S}= \{0,1,\emptyset\}$. The initial state is $s_1=0$. Therefore, the initial state distribution $b_1 \in \Delta(\mathcal{S})$ is the degenerate distribution $b_1(0)=1$.

\paragraph{Action spaces $\mathcal{A}_{\mathrm{k}}$}
Each player $\mathrm{k}\in \mathcal{N}$ can invoke two actions: ($\mathsf{S}$)top and ($\mathsf{C}$)ontinue. The action spaces are thus $\mathcal{A}_{\mathrm{D}}=\mathcal{A}_{\mathrm{A}}=\{\mathsf{S},\mathsf{C}\}$. Executing action $\mathsf{S}$ triggers a change in the game, while action $\mathsf{C}$ is a passive action. The attacker can invoke the stop action twice: the first to start the intrusion and the second to terminate it. The defender can invoke the stop action $L \geq 1$ times. Each invocation corresponds to a defensive action against a possible intrusion. The number of stop actions remaining to the defender at time $t$ is known to both players and is denoted by $l_t \in \{1,\hdots,L\}$. Using the encoding $(\mathsf{S},\mathsf{C})= (1,0)$, we can write $l_{t+1}=l_t-a_t^{(\mathrm{D})}$, where $a_t^{(\mathrm{D})}$ is the defender action at time $t$. At each time step, the attacker and the defender simultaneously choose their actions $a_t = (a^{(\mathrm{D})}_t, a^{(\mathrm{A})}_t)$, where $a^{(\mathrm{k})}_t \in \mathcal{A}_{\mathrm{k}}$.

\paragraph{Observation space $\mathcal{O}$}
The attacker has complete observability and knows the game state, the defender's actions, and the defender's observations. In contrast, the defender has a limited set of observations $o_t \in \mathcal{O}$, where $\mathcal{O}$ is a finite set. In our use case, $o_t$ relates to the weighted sum of IDS alerts triggered during time step $t$.

\begin{figure*}
  \centering
    \scalebox{0.85}{
      \includegraphics{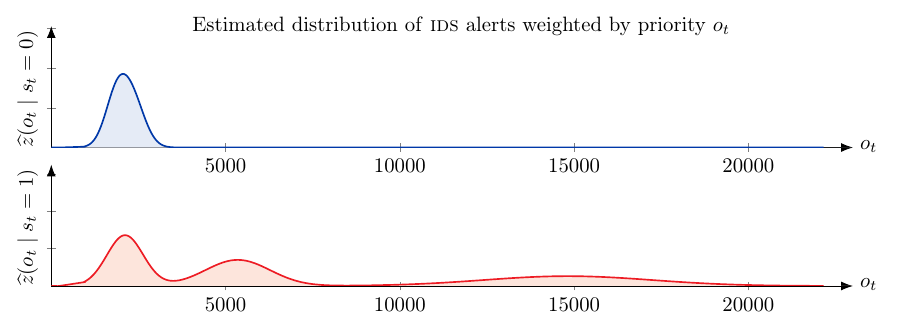}
    }
    \caption{Fitted Gaussian mixture models of $z$ when no intrusion occurs ($s_t=0$) and during intrusion ($s_t=1$) for the flow control Markov game.}
    \label{fig:p2_ids_distribution}
\end{figure*}

\paragraph{Transition function $f_l(s^{\prime} \mid s, a^{(\mathrm{D})},  a^{(\mathrm{A})})$}
At each time step $t$, a transition from $s_t$ to $s_{t+1}$ occurs with probability $f_l(s_{t+1} \mid s_t, a^{(\mathrm{D})}_t, a^{(\mathrm{A})}_t)$, where $f_l$ is defined as
\begin{subequations}\label{p2_transition_function}
\begin{align}
&f_{l>1}(0 \mid 0,\mathsf{S},\mathsf{C})= f_{l}(0 \mid 0, \mathsf{C}, \mathsf{C})=1,\label{eq:p2_tp_1}\\
&f_{l>1}(1 \mid 1, \cdot,\mathsf{C})= f_l(1 \mid 1,\mathsf{C},\mathsf{C}) = 1-\phi_{l},  \label{eq:p2_tp_2}\\
&f_{l>1}(1 \mid 0,\cdot,\mathsf{S})= f_l(1\mid 0,\mathsf{C},\mathsf{S}) = 1,\label{eq:p2_tp_3}\\
&f_{l>1}(\emptyset\mid 1, \cdot,\mathsf{C})= f_l(\emptyset \mid 1,\mathsf{C},\mathsf{C})= \phi_{l},\label{eq:p2_tp_4}\\
&f_{l=1}(\emptyset \mid \cdot,\mathsf{S},\cdot)= f_l(\emptyset \mid \emptyset,\cdot, \cdot)= f_l(\emptyset \mid 1,\cdot, \mathsf{S})= 1.\label{eq:p2_tp_7}
\end{align}
\end{subequations}
All other state transitions have probability $0$. Equations (\ref{eq:p2_tp_1})--(\ref{eq:p2_tp_2}) define the probabilities of the recurrent transitions $0\rightarrow 0$ and $1\rightarrow 1$. The game stays in state $0$ with probability $1$ if the attacker selects action $\mathsf{C}$ and $l_t-a_t^{(\mathrm{D})}>0$. Similarly, the game stays in state $1$ with probability $1-\phi_{l}$ if the attacker chooses action $\mathsf{C}$ and $l_t-a_t^{(\mathrm{D})}>0$. Here, $\phi_{l}$ denotes the probability that the defender stops the intrusion, which is a parameter of the use case. The intrusion can be stopped at any time step, either because the attacker terminates the intrusion or as a consequence of previous stop actions by the defender, i.e., the effect of a defensive action is non-immediate. We assume that $\phi_{l}$ increases with each stop action that the defender takes.

Equation (\ref{eq:p2_tp_3}) captures the transition $0 \rightarrow 1$, which occurs when the attacker chooses action $\mathsf{S}$ and $l_t-a_t^{(\mathrm{D})}>0$. (\ref{eq:p2_tp_4})--(\ref{eq:p2_tp_7}) define the probabilities of the transitions to the terminal state $\emptyset$, which is reached in three cases: (\textit{i}) when $l_t=1$ and the defender takes the final stop action $\mathsf{S}$ (i.e., when $l_t-a^{(\mathrm{D})}_t=0$); (\textit{ii}) when the intrusion is stopped by the defender with probability $\phi_{l}$; and (\textit{iii}) when $s_t=1$ and the attacker terminates the intrusion ($a^{(\mathrm{A})}_t=\mathsf{S}=1$).

\paragraph{Reward function $r_l(s, a^{(\mathrm{D})}, a^{(\mathrm{A})})$}
At time step $t$, the defender receives the reward $r_t = r_l(s_t,a^{(\mathrm{D})}_t, a^{(\mathrm{A})}_t)$ and the attacker receives the reward $-r_t$. The reward function is parameterized by the defender's reward for stopping an intrusion ($\mathrm{R}_{\mathrm{st}} > 0$), its cost of taking a defensive action ($\mathrm{R}_{\mathrm{cost}} < 0$), and its cost while an intrusion occurs ($\mathrm{R}_{\mathrm{int}} < 0$), as defined below.
\begin{subequations}\label{eq:p2_reward_function}
\begin{align}
&r_l(\emptyset, \cdot)=0,\label{eq:p2_reward_0}\\
&r_l(1, \cdot,\mathsf{S}) = 0, \label{eq:p2_reward_1}\\
&r_l(0, \mathsf{C},\cdot)=0, \label{eq:p2_reward_2}\\
&r_l(0, \mathsf{S},\cdot)=\frac{\mathrm{R}_{\mathrm{cost}}}{l_t}, && l_t \in \{1,2,\hdots,L\},   \label{eq:p2_reward_3}\\
&r_l(1, \mathsf{S},\mathsf{C})=\frac{\mathrm{R}_{\mathrm{st}}}{l_t}, && l_t \in \{1,2,\hdots,L\}, \label{eq:p2_reward_4}\\
&r_l(1, \mathsf{C},\mathsf{C})=\mathrm{R}_{\mathrm{int}}.  \label{eq:p2_reward_5}
\end{align}
\end{subequations}
Equations (\ref{eq:p2_reward_0})--(\ref{eq:p2_reward_1}) state that the reward is zero in the terminal state and when the attacker terminates an intrusion. Equation (\ref{eq:p2_reward_2}) states that the defender incurs no cost when no attack occurs and it does not take a defensive action. Equation (\ref{eq:p2_reward_3}) indicates that the defender incurs a cost when taking a defensive action if no intrusion is ongoing. Equation (\ref{eq:p2_reward_4}) states that the defender receives a reward when taking a stop action while an intrusion occurs. Lastly, (\ref{eq:p2_reward_5}) indicates that the defender incurs a cost for each time step during which an intrusion occurs.

\paragraph{Observation function $z$} We estimate the observation function $z$ based on data from the digital twin. Specifically, at the end of every time step in the digital twin, i.e., at the end of each $30$s interval, we collect the number of IDS alerts with priorities $1$--$4$ that occurred during the time step, where priorities $1$--$4$ refer to the Snort priorities ``very low'', ``low'', ``medium'', and ``high'', respectively \cite{snort}\footnote{Note that according to Snort's terminology \cite{snort}, $1$ is the highest priority. We invert the labeling in our model for convenience.}. We do so for $23,000$ time steps, which provides us with a dataset to estimate the distribution of IDS alerts. Using this dataset, we apply expectation-maximization \cite{em_demp_77} to fit Gaussian mixture distributions $\widehat{z}(\cdot \mid 0)$ and $\widehat{z}(\cdot \mid 1)$ as estimates of $z(\cdot \mid 0)$ and $z(\cdot \mid 1)$, which represent the true observation distributions in the target infrastructure.

Figure~\ref{fig:p2_ids_distribution} shows the fitted models over the discrete observation space $\mathcal{O} = \{0,1,\hdots,22000\}$. We note that $\widehat{z}(\cdot \mid 0)$ and $\widehat{z}(\cdot \mid 1)$ are (discretized) Gaussian mixtures with one and three components, respectively. Both mixtures have the most probability mass within $0$--$5000$. The distribution $\widehat{z}(\cdot \mid 1)$ also has substantial probability mass at larger values.

\paragraph{Theoretical analysis}
A detailed theoretical analysis of this game can be found in \cite{hammar_stadler_tnsm_game_23}.
\subsection{Network Segmentation Markov Game}\label{sec:segmentation_game}
We formulate the network segmentation use case when the attacker follows a dynamic strategy as a partially observed zero-sum stochastic game (Markov game). The game has two players: the ($\mathrm{D}$)efender and the ($\mathrm{A}$)ttacker.

The game is played on an IT infrastructure with application servers connected by a communication network that is segmented into zones. Overlaid on this physical infrastructure is a virtual infrastructure with a tree structure that includes \textit{nodes}, which collectively offer services to clients. A service is modeled as a \textit{workflow}, which comprises a set of interdependent nodes. A dependency between two nodes reflects information exchange through service invocations. 

In the following, we describe the components of the game, its evolution, and the players' objectives.
\paragraph{Infrastructure}
We model the virtual infrastructure as a (finite) directed graph $\mathcal{G} = \langle \{\mathrm{gw}\}\cup \mathcal{V}, \mathcal{E} \rangle$. The graph has a tree structure rooted at the gateway $\mathrm{gw}$. Each node $i \in \mathcal{V}$ has three state variables. Variable $v_{i,t}^{(\mathrm{R})}$ represents the reconnaissance state. We have $v_{i,t}^{(\mathrm{R})}=1$ if the attacker has discovered the node, $0$ otherwise. Variable $v_{i,t}^{(\mathrm{I})}$ represents the intrusion state. We have $v_{i,t}^{(\mathrm{I})}=1$ if the attacker has compromised the node, $0$ otherwise. Lastly, variable $v^{(\mathrm{Z})}_{i,t}$ indicates the zone in which the node resides. We call a node \textit{active} if it is functional as part of a workflow (denoted $\alpha_{i,t}=1$). Due to a defender action (e.g., a shutdown), a node $i \in \mathcal{V}$ may become inactive ($\alpha_{i,t}=0$). The active state is determined by $v^{(\mathrm{Z})}_{i,t}$, i.e., $\alpha_{i,t}$ is a function of $v^{(\mathrm{Z})}_{i,t}$.

\paragraph{Workflows}

We model a workflow $w \in \mathcal{W}$ as a subtree $\mathcal{G}_{w} = \langle \{\mathrm{gw}\}\cup \mathcal{V}_{w}, \mathcal{E}_{w} \rangle$ of the infrastructure graph. Workflows do not overlap except for the gateway, which belongs to all workflows.

\paragraph{Attacker}
At each time $t$, the attacker takes an action $a^{(\mathrm{A})}_t$, which is defined as the composition of the local actions on all nodes $a^{(\mathrm{A})}_t = (a^{(\mathrm{A})}_{1,t}, \hdots, a^{(\mathrm{A})}_{|\mathcal{V}|,t}) \in \mathcal{A}_{\mathrm{A}}$, where the set $\mathcal{A}_{\mathrm{A}}$ is finite. A local action is either a null action (denoted with $\bot$) or an offensive action. An offensive action on a node $i$ may change the reconnaissance state $v^{(\mathrm{R})}_{i,t}$ or the intrusion state $v^{(\mathrm{I})}_{i,t}$. A node $i$ can only be compromised if it is discovered, i.e., if $v^{(\mathrm{R})}_{i,t}=1$. We express this constraint as $a^{(\mathrm{A})}_t \in \mathcal{A}_{\mathrm{A}}(s^{(\mathrm{A})}_t)$.

The attacker state $s^{(\mathrm{A})}_t = \big(v^{(\mathrm{R})}_{i,t},v^{(\mathrm{I})}_{i,t}\big)_{i \in \mathcal{V}} \in \mathcal{S}_{\mathrm{A}}$ evolves as
\begin{align}
s^{(\mathrm{A})}_{t+1} \sim f_{\mathrm{A}}\big(\cdot \mid s^{(\mathrm{A})}_t, a^{(\mathrm{A})}_t,a^{(\mathrm{D})}_t\big),\label{eq:p3_transition_atc}
\end{align}
where $a^{(\mathrm{D})}_t$ represents the defender action at time $t$, as defined below.

\paragraph{Defender}
At each time $t$, the defender takes action $a^{(\mathrm{D})}_t$, which is defined as the composition of the local actions on all nodes $a^{(\mathrm{D})}_t = (a^{(\mathrm{D})}_{1,t}, \hdots, a^{(\mathrm{D})}_{|\mathcal{V}|,t}) \in \mathcal{A}_{\mathrm{D}}$, where the set $\mathcal{A}_{\mathrm{D}}$ is finite. A local action is either a defensive action or a null action $\bot$. Each defensive action $a^{(\mathrm{D})}_{i,t}\neq \bot$ leads to $s^{(\mathrm{A})}_{i,t+1}=(0,0)$ and may affect $v^{(\mathrm{Z})}_{i,t+1}$.

The defender state $s^{(\mathrm{D})}_t = \big(v^{(\mathrm{Z})}_{i,t}\big)_{i \in \mathcal{V}} \in \mathcal{S}_{\mathrm{D}}$ evolves as
\begin{align}
s^{(\mathrm{D})}_{t+1} \sim f_{\mathrm{D}}\big(\cdot \mid s^{(\mathrm{D})}_t, a^{(\mathrm{D})}_t\big).\label{eq:p3_transition_def}
\end{align}
\paragraph{Clients}

Clients consume infrastructure services by accessing workflows. We model client behavior through stationary stochastic processes, which affect the observations available to the attacker and the defender. That is, the clients are implicitly modeled by the observation function $z$, as defined next.

\paragraph{Observability and strategies}
At each time $t$, the defender and the attacker both observe $o_t = \left(o_{1,t},\hdots,o_{|\mathcal{V}|,t}\right) \in \mathcal{O}$, where $\mathcal{O}$ is finite\footnote{In our use case,  $o_{i,t}$ relates to the number of intrusion alerts associated with node $i$.}. The observation $o_t$ is drawn from the random vector $O_t = (O_{1,t},\hdots,O_{|\mathcal{V}|,t})$ whose marginal distributions $z_{O_1}, \hdots, z_{O_{|\mathcal{V}|}}$ are stationary and conditionally independent given $s_{i,t}=(s^{(\mathrm{D})}_{i,t},s^{(\mathrm{A})}_{i,t})$. (Note that $z_{O_i}$ depends on the traffic generated by clients.) As a consequence, the joint conditional distribution $z(o\mid s)$ is given by
\begin{align}
z\big(o \mid s\big)=\prod^{|\mathcal{V}|}_{i=1} z_{O_i}\big(o_{i} \mid s_{i}\big) && \forall o \in \mathcal{O}, s \in \mathcal{S}_{\mathrm{A}} \times \mathcal{S}_{\mathrm{D}}.\label{eq:p3_obs_1}
\end{align}
The sequence of observations and states at times $1,\hdots,t$ forms the histories
\begin{align*}
h^{(\mathrm{D})}_t&= (b^{(\mathrm{D})}_1, s^{(\mathrm{D})}_1,a_1^{(\mathrm{D})},o_2,\hdots,a_{t-1}^{(\mathrm{D})},s^{(\mathrm{D})}_t,o_t) \in \mathcal{H}_{\mathrm{D}},\\
h^{(\mathrm{A})}_t &= (b^{(\mathrm{A})}_1, s_1^{(\mathrm{A})},a_1^{(\mathrm{A})},o_2,\hdots,a_{t-1}^{(\mathrm{A})},s_t^{(\mathrm{A})},o_t) \in \mathcal{H}_{\mathrm{A}},
\end{align*}
where $s^{(\mathrm{D})} \sim b^{(\mathrm{D})}_1$ and $s^{(\mathrm{A})} \sim b^{(\mathrm{A})}_1$ are the initial state distributions.

Based on their respective histories, the defender and the attacker select actions according to their strategies. The defender's behavior strategy is defined as $\pi_{\mathrm{D}}\in \Pi_{\mathrm{D}} = \mathcal{H}_{\mathrm{D}} \rightarrow \Delta(\mathcal{A}_{\mathrm{D}})$ and the attacker's behavior strategy is defined as $\pi_{\mathrm{A}} \in \Pi_{\mathrm{A}} = \mathcal{H}_{\mathrm{A}} \rightarrow \Delta(\mathcal{A}_{\mathrm{A}})$.

\paragraph{Defender Objective}
When selecting the strategy $\pi_{\mathrm{D}}$, the defender must balance two conflicting objectives: maximizing the workflow utility towards its clients and minimizing the cost of intrusion. The weight $\eta \geq 0$ controls the trade-off between these two objectives, which results in the bi-objective
\begin{align}
  &J = \sum_{t=1}^{\infty}\gamma^{t-1} \left(\sum_{w \in \mathcal{W}}\sum_{i \in \mathcal{V}_{w}}\underbrace{\eta u^{(\mathrm{W})}_{i,t}}_{\text{workflow utility}} - \underbrace{c_{i,t}^{(\mathrm{I})}}_{\text{intrusion cost}}\right), \label{eq:p3_objective_fun}
\end{align}
where $\gamma \in [0,1)$ is a discount factor, $c_{i,t}^{(\mathrm{I})}= v^{(\mathrm{I})}_{i,t} + c^{(\mathrm{A})}(a^{(\mathrm{D})}_{i,t})$ is the intrusion cost associated with node $i$ at time $t$\footnote{$c^{(\mathrm{A})}$ is a non-negative function that represents the operational costs of defender actions.},  and $u^{(\mathrm{W})}_{i,t}$ expresses the workflow utility associated with node $i$ at time $t$.

\paragraph{Markov Game}
When the game starts at $t=1$, $s^{(\mathrm{D})}_1$ and $s_1^{(\mathrm{A})}$ are sampled from $b^{(\mathrm{D})}_1$ and $b^{(\mathrm{A})}_1$, respectively. A play of the game proceeds in time steps $t=1,2,\hdots$. At each time $t$, the defender observes $h^{(\mathrm{D})}_t$ and the attacker observes $h^{(\mathrm{A})}_t$. Based on these histories, both players select actions according to their respective strategies, i.e., $a^{(\mathrm{D})}_{t} \sim \pi_{\mathrm{D}}(\cdot \mid h^{(\mathrm{D})}_t)$ and $a^{(\mathrm{A})}_{t} \sim \pi_{\mathrm{A}}(\cdot \mid h^{(\mathrm{A})}_t)$. As a result of these actions, five events occur at time $t+1$: (\textit{i}) $o_{t+1}$ is sampled from $z$; (\textit{ii}) $s_{t+1}^{(\mathrm{D})}$ is sampled from $f_{\mathrm{D}}$; (\textit{iii}) $s_{t+1}^{(\mathrm{A})}$ is sampled from $f_{\mathrm{A}}$; (\textit{iv}) the defender receives the reward $r(s_t,a^{(\mathrm{D})}_{t})$; and (\textit{v}) the attacker receives the reward $-r(s_t,a^{(\mathrm{D})}_{t})$, where the reward function $r$ is defined by the expression within brackets in (\ref{eq:p3_objective_fun}).
\paragraph{Theoretical analysis}
A detailed theoretical analysis of this game can be found in \cite{hammar_gamesec23}.
\subsection{Replication Control MDP}\label{sec:replication_mdp}
We formulate the replication control use case when the attacker follows a static strategy as the following MDP.

\paragraph{States}
We define the state $s_{t}$ to represent the expected number of healthy replicas at time $t$. The state space is $\mathcal{S} = \{0,1,\hdots,s_{\mathrm{max}}\}$ and the initial state is $s_{1}=N_1$.

\paragraph{Actions}
The action $a_t = 1$ means that a new replica is added to the system; $a_t=0$ otherwise. Hence, the action space is $a_t\in \{0,1\} = \mathcal{A}$.

\paragraph{Transition probabilities}
The state $s_t$ evolves as
\begin{align}
s_{t+1} &\sim f(\cdot \mid s_t, a_t).\label{eq:transition_system_controller}
\end{align}
We estimate the conditional probability distribution $f$ based on measurements from the digital twin. A subset of the estimated conditional distributions is illustrated in Fig.~\ref{fig:pmf_1}.

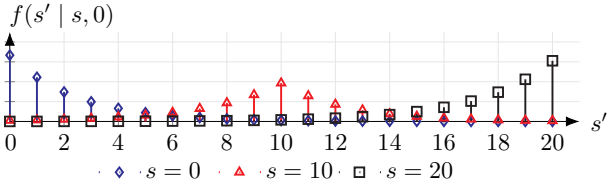
\begin{figure}[H]
  \centering
    \scalebox{0.91}{
      \begin{tikzpicture}[
    dot/.style={
        draw=black,
        fill=blue!90,
        circle,
        minimum size=3pt,
        inner sep=0pt,
        solid,
    },
    ]

\node[scale=1] (kth_cr) at (0,2.15)
{
  \begin{tikzpicture}[declare function={sigma(\x)=1/(1+exp(-\x));
      sigmap(\x)=sigma(\x)*(1-sigma(\x));}]
\pgfplotstableread{
0 0.333451
1 0.222301
2 0.148200
3 0.098800
4 0.065867
5 0.043911
6 0.029274
7 0.019516
8 0.013010
9 0.008673
10 0.005781
11 0.003853
12 0.002567
13 0.001710
14 0.001137
15 0.000754
16 0.000497
17 0.000323
18 0.000203
19 0.000118
20 0.000054
}\datatablee
\pgfplotstableread{
0 0.005066
1 0.007290
2 0.010491
3 0.015098
4 0.021726
5 0.031265
6 0.044991
7 0.064743
8 0.093167
9 0.134069
10 0.192929
11 0.128619
12 0.085746
13 0.057164
14 0.038109
15 0.025406
16 0.016937
17 0.011291
18 0.007527
19 0.005018
20 0.003345
}\datatableee
\pgfplotstableread{
0 0.000113
1 0.000238
2 0.000393
3 0.000599
4 0.000884
5 0.001287
6 0.001861
7 0.002685
8 0.003868
9 0.005569
10 0.008016
11 0.011537
12 0.016603
13 0.023893
14 0.034383
15 0.049478
16 0.071200
17 0.102459
18 0.147441
19 0.212172
20 0.305320
}\datatableeee
\begin{axis}[
        xmin=0,
        xmax=21,
        ymin=0,
        ymax=0.45,
        width =1.2\columnwidth,
        height = 0.35\columnwidth,
        axis lines=center,
        xmajorgrids=true,
        ymajorgrids=true,
        major grid style = {lightgray},
        minor grid style = {lightgray!25},
        ytick={0.0, 0.1,0.2, 0.3, 0.4, 0.5, 0.6, 0.7, 0.8, 0.9, 1.0},
        yticklabels={$0$, $$, $$, $$, $$, $$, $$, $$, $$, $$,$$},
        xtick={0,2,4,6,8,10,12,14,16,18,20},
        xticklabels={$0$, $2$, $4$, $6$, $8$, $10$, $12$, $14$, $16$, $18$, $20$},
        scaled y ticks=false,
        yticklabel style={
        /pgf/number format/fixed,
        /pgf/number format/precision=5
        },
        xlabel style={below right},
        ylabel style={above left},
        axis line style={-{Latex[length=2mm]}},
        legend style={at={(0.8,-0.35)}},
        legend columns=3,
        legend style={
          draw=none,
            /tikz/column 2/.style={
                column sep=5pt,
              }
              }
              ]
              \addplot+[ycomb,Blue,thick, mark=diamond] table [x index=0, y index=1] {\datatablee};
              \addplot+[ycomb,Red,thick, mark=triangle] table [x index=0, y index=1] {\datatableee};
              \addplot+[ycomb,Black,thick, mark=square] table [x index=0, y index=1] {\datatableeee};
\legend{$s=0$, $s=10$, $s=20$}
\end{axis}
\end{tikzpicture}
};

\node[inner sep=0pt,align=center, scale=1, rotate=0, opacity=1] (obs) at (-3.95,1.7)
{
  $0$
};

\node[inner sep=0pt,align=center, scale=1, rotate=0, opacity=1] (obs) at (4.65,2)
{
  $s^{\prime}$
};
\node[inner sep=0pt,align=center, scale=1, rotate=0, opacity=1] (obs) at (-3.2,3.55)
{
  $f(s^{\prime} \mid s, 0)$
};

\end{tikzpicture}
    }
    \caption{Transition function for the replication control MDP.}
    \label{fig:pmf_1}
  \end{figure}

\paragraph{Objective}
Increasing the replication factor $s_t$ improves service availability $T^{(\mathrm{A})}$ but increases cost. ($T^{(\mathrm{A})}$ is the fraction of time steps where service is available.) The goal of the controller is thus to find the optimal cost-redundancy trade-off, i.e., to minimize
\begin{align}
J&= \lim_{T \rightarrow \infty}\left[\sum_{t=1}^T\frac{a_t}{T}\right],&& \text{subject to }T^{(\mathrm{A})} \geq \epsilon_{\mathrm{A}},\label{eq:p4_objective_response}
\end{align}
where $\epsilon_{\mathrm{A}}$ is the chosen lower bound on service availability. For instance, if $\epsilon_{\mathrm{A}}=0.999$, then at most $8.4$ hours of service disruption per year is allowed.

\paragraph{Theoretical analysis}
A detailed theoretical analysis of this MDP can be found in \cite{dsn24_hammar_stadler}.
\subsection{Replication Control Markov Game}\label{sec:replication_game}
The Markov game model of the replication control use case follows the same model as the MDP, except that the attacker's strategy is dynamic. Specifically, the attacker can control which nodes in the system to attack, which causes them to be compromised with probability $p_{\mathrm{A}}=0.01$. Hence, the attacker's strategy influences the transition function in \eqref{eq:transition_system_controller}. The objective of the attacker is diametrically opposed to the controller, i.e., it is a zero-sum game.

\paragraph{Theoretical analysis}
A detailed theoretical analysis of this game can be found in \cite{hammar_gamesec24}.

\subsection{Recovery Control POMDP}\label{sec:recovery_control}
The recovery control use case involves a networked system with $K$ service replicas. We formulate this use case as the following POMDP.

\paragraph{States}
Each replica has two states: $1$ (compromised) or $0$ (safe), i.e., $s=(s^1,\hdots,s^K)$ where $s^l \in \{0,1\}$. Compromises occur randomly over time and incur operational costs. The transition probabilities are defined as follows. If replica $l$ is compromised ($s^l=1$), then it remains so until recovery is applied ($a^l=1$), at which point the state $s^l$ is set to $0$. Otherwise, the probability that it becomes compromised is $\min\{0.2(1+\mathcal{N}_l(s)), 1\}$, where $\mathcal{N}_l(s)$ is the number of compromised neighbors of replica $l$ in the network.

\paragraph{Actions}
An action is defined as a vector $a=(a^1,\hdots,a^K)$, where each $a^l$ determines whether to recover component $l$ ($a^l = 1$) or take no action ($a^l = 0$). The goal is to determine an optimal recovery strategy $\pi^\star$ that balances security requirements against recovery costs.

\paragraph{Observations}
Intrusion detection systems generate security alerts $o = (o^1, \ldots, o^K)$ that provide partial indications of the replicas' states. We define the observation distribution as
\begin{align*}
p(o \mid s, a) &= \prod_{l=1}^Kp(o^l \mid s^l), && \text{for all }o\in \mathcal{O},s\in \mathcal{S},a\in \mathcal{A},
\end{align*}  
where each $p(o^l \mid s^l)$ is estimated based on measurements from the digital twin. A subset of the estimated distributions is shown in Fig.~\ref{fig:obs_dist}.

\begin{figure*}
  \centering
\scalebox{0.7}{
    \begin{tikzpicture}

\pgfplotstableread{
25.0 1.6003200640128026e-05
75.0 2.8005601120224046e-05
125.0 0.00015603120624124825
175.0 0.0004600920184036807
225.0 0.0010682136427285457
275.0 0.002152430486097219
325.0 0.003240648129625925
375.0 0.0040888177635527104
425.0 0.003644728945789158
475.0 0.002448489697939588
525.0 0.0016323264652930586
575.0 0.0007161432286457291
625.0 0.00023604720944188836
675.0 7.201440288057611e-05
725.0 2.8005601120224046e-05
775.0 1.2002400480096018e-05
825.0 0.0
875.0 0.0
925.0 0.0
975.0 00.
}\nointzero

\pgfplotstableread{
25.0 4.022526146419952e-06
75.0 1.6090104585679808e-05
125.0 1.6090104585679808e-05
175.0 0.00012067578439259854
225.0 0.00018101367658889783
275.0 0.0004304102976669349
325.0 0.0006838294448913918
375.0 0.0010981496379726468
425.0 0.0016452131938857603
475.0 0.001830249396621078
525.0 0.0023813354786806113
575.0 0.0024094931617055513
625.0 0.0025382139983909893
675.0 0.0021158487530168945
725.0 0.0017940466613032985
775.0 0.0010579243765084473
825.0 0.0008487530168946098
875.0 0.0004545454545454545
925.0 0.00026548672566371683
975.0 0.00010860820595333871
}\intzero

\pgfplotstableread{
25.0 0.000062
75.0 0.000281
125.0 0.000955
175.0 0.002011
225.0 0.002550
275.0 0.003115
325.0 0.003551
375.0 0.003250
425.0 0.002581
475.0 0.002011
525.0 0.001955
575.0 0.001502
625.0 0.000652
675.0 0.000195
725.0 0.000048
775.0 0.000011
825.0 0.000002
875.0 0.000000
925.0 0.000000
975.0 0.000000
}\nointone

\pgfplotstableread{
25.0 0.000002
75.0 0.000008
125.0 0.000025
175.0 0.000075
225.0 0.000198
275.0 0.000455
325.0 0.000912
375.0 0.001651
425.0 0.002488
475.0 0.002815
525.0 0.002951
575.0 0.002988
625.0 0.002811
675.0 0.002515
725.0 0.002301
775.0 0.001655
825.0 0.000851
875.0 0.000321
925.0 0.000105
975.0 0.000031
}\intone

\pgfplotstableread{
25.0 0.000105
75.0 0.000288
125.0 0.000672
175.0 0.001331
225.0 0.002235
275.0 0.003188
325.0 0.003881
375.0 0.003950
425.0 0.003310
475.0 0.002251
525.0 0.001475
575.0 0.001221
625.0 0.001095
675.0 0.000550
725.0 0.000168
775.0 0.000045
825.0 0.000010
875.0 0.000002
925.0 0.000000
975.0 0.000000
}\nointtwo

\pgfplotstableread{
25.0 0.000011
75.0 0.000030
125.0 0.000075
175.0 0.000161
225.0 0.000318
275.0 0.000572
325.0 0.000941
375.0 0.001420
425.0 0.001965
475.0 0.002450
525.0 0.002758
575.0 0.002791
625.0 0.002510
675.0 0.002181
725.0 0.002015
775.0 0.001550
825.0 0.000678
875.0 0.000201
925.0 0.000051
975.0 0.000012
}\inttwo

\pgfplotstableread{
25.0 0.000109
75.0 0.000224
125.0 0.000421
175.0 0.000720
225.0 0.001103
275.0 0.001478
325.0 0.001798
375.0 0.001995
425.0 0.002046
475.0 0.001925
525.0 0.001683
575.0 0.001402
625.0 0.001224
675.0 0.001301
725.0 0.001795
775.0 0.001251
825.0 0.000450
875.0 0.000089
925.0 0.000015
975.0 0.000002
}\nointthree

\pgfplotstableread{
25.0 0.000095
75.0 0.000199
125.0 0.000382
175.0 0.000659
225.0 0.001021
275.0 0.001395
325.0 0.001691
375.0 0.001890
425.0 0.001962
475.0 0.001889
525.0 0.001701
575.0 0.001449
625.0 0.001211
675.0 0.001155
725.0 0.001391
775.0 0.001685
825.0 0.001015
875.0 0.000321
925.0 0.000065
975.0 0.000010
}\intthree

\pgfplotsset{/dummy/workaround/.style={/pgfplots/axis on top}}

\node[scale=1] (kth_cr) at (3.78,-0.58)
{
\begin{tikzpicture}
  \begin{axis}[
      width=7cm,
      height=3.5cm,
      ytick=\empty,
      ybar, 
      bar width=45,
      xmin=0,
      xmax=1000,
      ymin=0,
      ymax=0.005,
        legend style={at={(2.7,-0.65)}, nodes={scale=0.9, transform shape,anchor=west}},
        legend columns=0,
        legend style={
          draw=none,
      anchor=north east,
      /tikz/every even column/.style={anchor=west, column sep=5pt},
      /tikz/every odd column/.style={anchor=west},          
            /tikz/column 2/.style={
                column sep=5pt,
              }
              },      
              axis lines*=left,
              axis line style={-{Latex[length=1.8mm]}},
      enlarge x limits=0.05,
      yticklabel style={
          /pgf/number format/fixed,
          /pgf/number format/precision=4
      },
      ]
    \addplot[fill=bluefour, draw=black] table[x index=0, y index=1] {\nointzero};
    \addplot[fill=red, draw=black, postaction={pattern=dots}] table[x index=0, y index=1] {\intzero};
    \legend{Safe ($s^i=0$), Compromised ($s^i=1$)}
  \end{axis}
\end{tikzpicture}
};

\node[scale=1] (kth_cr) at (5.15,0)
{
\begin{tikzpicture}
  \begin{axis}[
      width=7cm,
      height=3.5cm,
      ytick=\empty,
      ybar, 
      bar width=45,
      xmin=0,
      xmax=1000,
      ymin=0,
      ymax=0.005,
      axis lines*=left,
      axis line style={-{Latex[length=1.8mm]}},
      enlarge x limits=0.05,
      yticklabel style={
          /pgf/number format/fixed,
          /pgf/number format/precision=4
      },
      ]
    \addplot[fill=bluefour, draw=black] table[x index=0, y index=1] {\nointone};
    \addplot[fill=red, draw=black, postaction={pattern=dots}] table[x index=0, y index=1] {\intone};
  \end{axis}
\end{tikzpicture}
};

\node[scale=1] (kth_cr) at (11,0)
{
\begin{tikzpicture}
  \begin{axis}[
      width=7cm,
      height=3.5cm,
      ytick=\empty,
      ybar, 
      bar width=45,
      xmin=0,
      xmax=1000,
      ymin=0,
      ymax=0.005,
      axis lines*=left,
      axis line style={-{Latex[length=1.8mm]}},
      enlarge x limits=0.05,
      yticklabel style={
          /pgf/number format/fixed,
          /pgf/number format/precision=4
      },
      ]
    \addplot[fill=bluefour, draw=black] table[x index=0, y index=1] {\nointtwo};
    \addplot[fill=red, draw=black, postaction={pattern=dots}] table[x index=0, y index=1] {\inttwo};
  \end{axis}
\end{tikzpicture}
};

\node[scale=1] (kth_cr) at (16.8,0)
{
\begin{tikzpicture}
  \begin{axis}[
      width=7cm,
      height=3.5cm,
      ytick=\empty,
      ybar, 
      bar width=45,
      xmin=0,
      xmax=1000,
      ymin=0,
      ymax=0.005,
      axis lines*=left,
      axis line style={-{Latex[length=1.8mm]}},
      enlarge x limits=0.05,
      yticklabel style={
          /pgf/number format/fixed,
          /pgf/number format/precision=4
      },
      ]
    \addplot[fill=bluefour, draw=black] table[x index=0, y index=1] {\nointthree};
    \addplot[fill=red, draw=black, postaction={pattern=dots}] table[x index=0, y index=1] {\intthree};
  \end{axis}
\end{tikzpicture}
};


\node[inner sep=0pt,align=center, scale=0.9, rotate=0, opacity=1] (obs) at (-0.45,-1.4)
{
  Number of failure alerts $o^1$
};
\node[inner sep=0pt,align=center, scale=0.9, rotate=0, opacity=1] (obs) at (5.25,-1.4)
{
  Number of failure alerts $o^2$
};
\node[inner sep=0pt,align=center, scale=0.9, rotate=0, opacity=1] (obs) at (11.1,-1.4)
{
  Number of failure alerts $o^3$
};
\node[inner sep=0pt,align=center, scale=0.9, rotate=0, opacity=1] (obs) at (16.7,-1.4)
{
  Number of failure alerts $o^4$
};

\node[inner sep=0pt,align=center, scale=0.9, rotate=0, opacity=1] (obs) at (-2.63,1.35)
{
  Probability
};

\node[inner sep=0pt,align=center, scale=0.9, rotate=0, opacity=1] (obs) at (3.2,1.35)
{
  Probability
};
\node[inner sep=0pt,align=center, scale=0.9, rotate=0, opacity=1] (obs) at (9.05,1.35)
{
  Probability
};
\node[inner sep=0pt,align=center, scale=0.9, rotate=0, opacity=1] (obs) at (14.87,1.35)
{
  Probability
};
\end{tikzpicture}
  }
\caption{Empirical observation distributions for the recovery control use case based on measurements from the digital twin.}\label{fig:obs_dist}
\end{figure*}
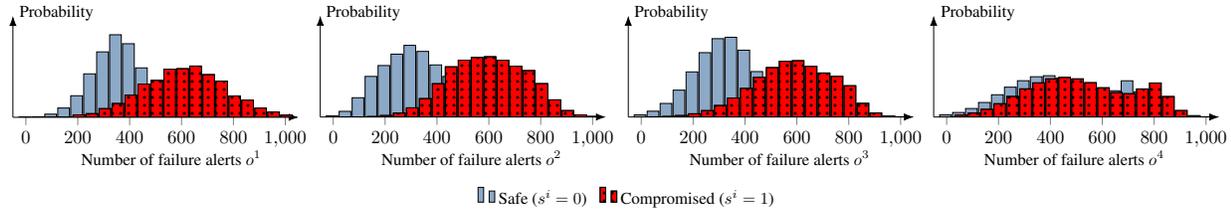

\paragraph{Rewards}
We define the reward function as
\begin{align}
r(s, a) = -\sum_{l=1}^{K}\left(\overbrace{2s^l(1-a^l)}^{\text{intrusion cost}} + \overbrace{a^l(1-s^l)}^{\text{recovery cost}}\right),\label{example_pomdp_cost}
\end{align}
i.e., negative rewards are incurred for unmitigated intrusions ($s^l=1$) and unnecessary recovery actions ($a^l=1$ and $s^l=0$).
\paragraph{Theoretical analysis}
A detailed theoretical analysis of this POMDP can be found in \cite{tifs_25_HLALB}.
\section{Proof of Proposition \ref{prop:misspecification_bound}}\label{app:misspecification_bound}
The proof follows the same chain of reasoning as the proof of the simulation lemma in \cite{Kearns2002}. For notational brevity, we use $\bm{\pi}(s)$ as a shorthand for $(\pi_{\mathrm{D}}(s), \pi_{\mathrm{A}}(s))$. We start by expanding the difference $|\tilde{J}_{\bm{\pi}}(s)-J_{\bm{\pi}}(s)|$ as
\begin{align*}
  &|\tilde{J}_{\bm{\pi}}(s)-J_{\bm{\pi}}(s)| \\
  &= \Biggl| r(s, \bm{\pi}(s)) + \gamma\sum_{s^{\prime} \in \mathcal{S}} \tilde{f}(s^{\prime} \mid s,\bm{\pi}(s))\tilde{J}_{\bm{\pi}}(s^{\prime})  - \biggl(\\
&\quad\quad r(s,\bm{\pi}(s)) + \gamma\sum_{s^{\prime} \in \mathcal{S}} f(s^{\prime} \mid s,\bm{\pi}(s))J_{\bm{\pi}}(s^{\prime})\biggr)\Biggr|  \\
  &=  \Biggl| \gamma\sum_{s^{\prime} \in \mathcal{S}} \tilde{f}(s^{\prime} \mid s,\bm{\pi}(s))\tilde{J}_{\bm{\pi}}(s^{\prime}) - \\
  & \quad\quad  \gamma\sum_{s^{\prime} \in \mathcal{S}} f(s^{\prime} \mid s,\bm{\pi}(s))J_{\bm{\pi}}(s^{\prime})\Biggr|  \\
  &=  \Biggl| \gamma\sum_{s^{\prime} \in \mathcal{S}} \tilde{f}(s^{\prime} \mid s,\bm{\pi}(s))\tilde{J}_{\bm{\pi}}(s^{\prime})  -\\
  &\quad\quad \gamma\sum_{s^{\prime} \in \mathcal{S}} f(s^{\prime} \mid s,\bm{\pi}(s))J_{\bm{\pi}}(s^{\prime}) +\\
  &\quad\quad \gamma\sum_{s^{\prime} \in \mathcal{S}} \tilde{f}(s^{\prime} \mid s,\bm{\pi}(s))J_{\bm{\pi}}(s^{\prime}) -\\
  &\quad\quad\gamma\sum_{s^{\prime} \in \mathcal{S}} \tilde{f}(s^{\prime} \mid s,\bm{\pi}(s))J_{\bm{\pi}}(s^{\prime})\Biggr|\\
&=  \Biggl| \gamma\sum_{s^{\prime} \in \mathcal{S}} \tilde{f}(s^{\prime} \mid s,\bm{\pi}(s))\left(\tilde{J}_{\bm{\pi}}(s^{\prime})-J_{\bm{\pi}}(s^{\prime})\right)  + \\
&\quad \gamma\sum_{s^{\prime} \in \mathcal{S}} \left(\tilde{f}(s^{\prime} \mid s,\bm{\pi}(s))-f(s^{\prime} \mid s,\bm{\pi}(s))\right)J_{\bm{\pi}}(s^{\prime})\Biggr|\\
  &\leq \gamma\norm{\tilde{J}_{\bm{\pi}}-J_{\bm{\pi}}}_{\infty} + \\
  &\quad \quad\Biggl|\gamma\sum_{s^{\prime} \in \mathcal{S}} \left(\tilde{f}(s^{\prime} \mid s,\bm{\pi}(s))-f(s^{\prime} \mid s,\bm{\pi}(s))\right)J_{\bm{\pi}}(s^{\prime})\Biggr|\\
  &\numleq{a} \gamma\norm{\tilde{J}_{\bm{\pi}}-J_{\bm{\pi}}}_{\infty} + \\
  &\quad\quad \gamma\sum_{s^{\prime} \in \mathcal{S}} \biggl|\left(\tilde{f}(s^{\prime} \mid s,\bm{\pi}(s))-f(s^{\prime} \mid s,\bm{\pi}(s))\right)\biggr|\frac{\beta}{1-\gamma}  \\
&\leq \gamma\norm{\tilde{J}_{\bm{\pi}}-J_{\bm{\pi}}}_{\infty} + \frac{\gamma \alpha \beta}{1-\gamma},
\end{align*}
where (a) follows because $|J_{\bm{\pi}}(s)| \leq \sum_{t=0}^{\infty}\gamma^{t}\beta = \frac{\beta}{1-\gamma}$ and the fact that $|ab|=|a||b|$ (we use the triangle inequality to move the absolute value inside the sum). Since this upper bound holds for any state $s$, we have
\begin{align*}
&\norm{\tilde{J}_{\bm{\pi}}-J_{\bm{\pi}}}_{\infty} \leq \gamma\norm{\tilde{J}_{\bm{\pi}}-J_{\bm{\pi}}}_{\infty} + \frac{\gamma \alpha \beta}{1-\gamma}\\
&\implies \norm{\tilde{J}_{\bm{\pi}}-J_{\bm{\pi}}}_{\infty}-\gamma\norm{\tilde{J}_{\bm{\pi}}-J_{\bm{\pi}}}_{\infty}\leq \frac{\gamma \alpha \beta}{1-\gamma}\\
&\implies (1-\gamma)\norm{\tilde{J}_{\bm{\pi}}-J_{\bm{\pi}}}_{\infty}\leq \frac{\gamma \alpha \beta}{1-\gamma}  \\
&\implies \norm{\tilde{J}_{\bm{\pi}}-J_{\bm{\pi}}}_{\infty}\leq \frac{\gamma \alpha \beta}{(1-\gamma)^2}.\qed
\end{align*}
\section{Example: Observation Mapping}\label{app:observation_example}
Consider the flow control use case (Fig.~\ref{fig:systems}.a). Assume that the attacker attempts a brute-force SSH login against a server in the target system. The mapping from raw events in the digital twin to observations that are input to the security strategy proceeds in three stages.

\paragraph{Stage 1 - raw log events} The attacker launches an SSH brute-force attack against a server with IP 172.31.2.10. This produces raw log entries on the target host, e.g., in \texttt{/var/log/auth.log}:
\small
\begin{verbatim}
Mar 12 09:14:01 srv2 sshd[4821]:
Failed password for root from
172.31.1.42 port 49822 ssh2
Mar 12 09:14:02 srv2 sshd[4821]:
Failed password for root from
172.31.1.42 port 49822 ssh2
...
\end{verbatim}
\normalsize

Simultaneously, the network traffic is captured by the monitoring agent, which records packet-level statistics (e.g., flow byte counts and connection durations).

\paragraph{Stage 2 - Detection alerts}
The Snort IDS, running on the same network segment, matches the traffic against its ruleset and produces alerts:
\small
\begin{verbatim}
[**] [1:2001219:20] ET SCAN Potential
SSH Brute Force [**]
[Priority: 2] 03/12-09:14:05.003
172.31.1.42:49822 -> 172.31.2.10:22
[Classification: Attempted Information Leak]

[**] [1:2003068:7] ET SCAN Potential
SSH Login Attempt [**]
[Priority: 2] 03/12-09:14:18.112
172.31.1.42:49830 -> 172.31.2.10:22
[Classification: Misc Attack]
\end{verbatim}
\normalsize
The monitoring agent on the host pushes both the raw metrics and the IDS alerts to the Kafka event bus. A data pipeline then consumes  these events, aggregates them over a fixed monitoring interval (e.g., 15 seconds), and writes the results to the metastore.

\paragraph{Stage 3 - Mapping to observations}
At the end of each monitoring interval, the aggregated data is transformed into a numerical observation that serves as input to the security strategy learned through reinforcement learning. For the flow control POMDP, the observation at time step t is:
\begin{align*}
  &o_t=(\text{number of severe alerts}, \text{number of warning alerts}, \\
  &\quad\quad\quad\text{number of login events}).
\end{align*}
For the example above, supposing the monitoring interval covers the attack window, one realization might be:
\begin{align*}
  o_t=(1, 9, 4).
\end{align*}
This observation is then used to update the defender's belief state $b_t$, which is input to the security strategy $\pi_{\mathrm{D}}$.
\section{Example: Digital Twin Configuration}\label{app:digital_twin_config}
A complete configuration of a digital twin of the system in Fig.~\ref{fig:systems}.a is available at \url{https://github.com/Kim-Hammar/csle/blob/master/emulation-system/envs/090/level_9/config.py}.
\section{Artifact Appendix}\label{app:artifact_appendix}

\subsection{Abstract}

All results presented in this paper are fully reproducible using open-source software and data. To enable independent verification of our results and encourage future research, we release a complete set of artifacts that allow the community to build upon our work without additional engineering effort.

Specifically, we provide the following artifacts:
\begin{itemize}
  \item The source code of CSLE, our reinforcement learning platform for autonomous security management. Available at: \url{https://github.com/Kim-Hammar/csle}.
  \item Pre-built Docker images for creating digital twins. Available at: \url{https://hub.docker.com/u/kimham}.
  \item A video that demonstrates how to install CSLE. Available at: \url{https://www.youtube.com/watch?v=l_g3sRJwwhc}.
  \item A video that demonstrates the web interface of CSLE. Available at: \url{https://www.youtube.com/watch?v=iE2KPmtIs2A}.
  \item A dataset of attack statistics generated with CSLE. Available at: \url{https://github.com/Kim-Hammar/csle/releases/download/v0.4.0/statistics_dataset_14_nov_22_json.zip}.
  \item Platform documentation. Available at: \url{https://github.com/Kim-Hammar/csle/blob/master/releases} (PDF version) and \url{https://kim-hammar.github.io/csle/} (web version).
\end{itemize}    

\subsection{Artifact check-list (meta-information)}

{\small
\begin{itemize}
  \item {\bf Dataset: }\url{https://github.com/Kim-Hammar/csle/releases/download/v0.4.0/statistics_dataset_14_nov_22_json.zip}.
  \item {\bf Publicly available?: } Yes, \url{https://github.com/Kim-Hammar/csle} and \url{https://hub.docker.com/u/kimham}.
  \item {\bf Code licenses (if publicly available)?: } CC-BY-SA 4.0.
  \item {\bf Data licenses (if publicly available)?: } CC-BY-SA 4.0.
  \item {\bf Archived: } Yes, \url{https://doi.org/10.5281/zenodo.18869003}.
\end{itemize}
}
\subsection{Description}

\subsubsection{How delivered}

The code, data, and documentation are available on GitHub: \url{https://github.com/Kim-Hammar/csle}. The video demonstrations are available on YouTube: \url{https://www.youtube.com/watch?v=l_g3sRJwwhc} and \url{https://www.youtube.com/watch?v=iE2KPmtIs2A}. The Docker images are available on DockerHub: \url{https://hub.docker.com/u/kimham}.

\subsubsection{Hardware dependencies}
Minimum hardware requirements to run CSLE are: 16GB memory (RAM), 1 CPU, and 50GB disk space.

\subsubsection{Software dependencies}
Docker, Python, PostgreSQL, Prometheus, cAdvisor, Grafana, Node, and NPM. A comprehensive list of software dependencies is available in the platform documentation: \url{https://github.com/Kim-Hammar/csle/blob/master/releases}.

\subsubsection{Datasets}

A dataset of attack statistics generated with CSLE is available at: \url{https://github.com/Kim-Hammar/csle/releases/download/v0.4.0/statistics_dataset_14_nov_22_json.zip}.

\subsection{Installation}

The installation is automated through the Ansible playbooks available at: \url{https://github.com/Kim-Hammar/csle/tree/master/ansible}.

\subsection{Evaluation and expected result}

To validate the availability of our artifacts, perform the following steps: (\textit{i}) access the code, documentation, and data at \url{https://github.com/Kim-Hammar/csle}; (\textit{ii}) access the video demonstrations at \url{https://www.youtube.com/watch?v=l_g3sRJwwhc} and \url{https://www.youtube.com/watch?v=iE2KPmtIs2A}; and (\textit{iii}) access the Docker images at \url{https://hub.docker.com/u/kimham}.
\end{document}